\documentclass[letterpaper]{article} 
\usepackage{aaai2026} 
\nocopyright
\usepackage{xr}
\usepackage{times} 
\usepackage{helvet} 
\usepackage{courier} 
\usepackage[hyphens]{url} 
\usepackage{graphicx} 
\usepackage{natbib} 
\usepackage{caption} 
\usepackage{amsmath}
\usepackage{amssymb}
\usepackage{amsfonts}
\usepackage{amsthm}
\usepackage{array}
\usepackage{booktabs}
\usepackage{multirow}
\usepackage{float}
\usepackage{placeins}
\usepackage{xcolor}

\newtheorem{theorem}{Certificate}
\newtheorem{lemma}{Lemma}
\newtheorem{proposition}{Proposition}
\newtheorem{definition}{Definition}
\newtheorem{assumption}{Assumption}

\newcolumntype{L}[1]{>{\raggedright\arraybackslash}p{#1}}

\title{Certified Speculative Execution for\\
Untrusted AI Agents}
\author{
    Chenyu Zhou\textsuperscript{\rm 1}\textsuperscript{*\dag},
    Qiliang Jiang\textsuperscript{\rm 2}\textsuperscript{*},
    Shuning Wu\textsuperscript{\rm 3},
    Xu Zhou\textsuperscript{\rm 3}\textsuperscript{\dag}
}
\affiliations{
    \textsuperscript{\rm 1}School of Engineering, Institute of Science Tokyo, Japan\\
    \textsuperscript{\rm 2}College of Control Science and Engineering, Zhejiang University, China\\
    \textsuperscript{\rm 3}Department of Electrical and Computer Engineering, National University of Singapore, Singapore\\
    zhou.c.76d6@m.isct.ac.jp, jiangqiliang@zju.edu.cn, shuningwu@u.nus.edu, zhouxu\_nus@u.nus.edu\\[0.4ex]
    \textsuperscript{*}Equal contribution.\quad\textsuperscript{\dag}Corresponding authors.
}

\begin{document}
\maketitle

\begin{abstract}
Hard-constrained sequential decision systems have no certified way to
spend the test-time compute of modern AI: executing the multi-step drafts
of a learned policy or a frozen LLM forfeits the feasibility guarantee a
trusted solver provides, while invoking the solver at every step forfeits
the speed the AI offers. Certificate-Gated Prefix Acceptance (CGPA) closes
this gap with a \emph{certified speculative-execution contract} for
untrusted AI agents: a trusted
verifier rejects constraint-violating transitions exactly, a conformally
calibrated value boundary gates the longest low-cost prefix within a
per-segment regret budget, and the rest defers to the solver --- so
safety, regret, and speed decouple by construction. The contract drives
every untrusted proposal source --- adversarial drafters and six
heterogeneous frozen LLMs (including a $12$B model that violates
constraints in $98\%$ of direct rollouts) --- to zero applied
violations; a certificate-aware learned boundary, conformally calibrated,
drives mean regret three orders of magnitude below unguarded acceptance, to
within sampling noise of the stepwise oracle ($95\%$ CI spanning zero), and
under calendar shift a learned proposal source overtakes it on $15$ of
$18$ held-out days. On a deployment-scale unit-commitment
instance it turns a frozen $8$B LLM into a $2.96\times$ per-episode wall-clock
speedup at $2.1\%$ regret, outpacing the domain heuristic ($1.79\times$) and a safe
receding-horizon baseline ($1.07\times$): the more
capable the untrusted source, the faster the certified system, at
guarantees that never change.
\end{abstract}

\noindent\textit{Keywords:} certified prefix acceptance, untrusted AI
agents, speculative execution, conformal calibration, scalable oversight

\section{Introduction}

Modern AI can draft sequences of decisions at low computational cost --- a learned policy
rolling out actions, a frozen large language model proposing a plan. Acting
on those drafts directly in \emph{hard-constrained} settings, however,
forfeits safety: a learned or LLM policy carries no feasibility guarantee
--- a frozen $12$B-parameter LLM proposes
constraint-violating battery actions in $98\%$ of episodes, and an
end-to-end reinforcement learner replacing the trusted module pays about $7\times$ its regret. The reliable alternative delegates every step to an
expensive trusted solver --- an LP/MILP optimizer, a model-predictive
planner, a shield --- that certifies feasibility and supplies a safe
repair on demand. But per-step
delegation pins the most expensive
component to the critical path even when short-horizon action sequences
are largely predictable. Safely deploying untrusted AI therefore needs a
certified layer that \emph{uses} AI drafts without \emph{trusting} them
--- one in which a thin trusted core owns every guarantee.

The missing layer has a precise shape: intelligence and guarantees must
come from different owners. Drafts must come from a source that is free to
be arbitrarily capable --- and arbitrarily wrong --- while every guarantee
is owned by a thin trusted core that never depends on the source. This
separation has one mature precedent: speculative decoding, where a cheap
drafter proposes tokens and the target model verifies a prefix and accepts
it in one pass \cite{leviathan2023speculative,chen2023speculative}. But a
token draft is checked against a probability distribution; a control draft
must be checked against \emph{hard feasibility} and against \emph{decision
quality}, and these two checks cannot be the same object, because a
constraint-safe action prefix can still be economically poor. What
verification means for executed decisions --- and what acceptance is worth
--- has to be rebuilt from the ground up.

We make this precise as Certificate-Gated Prefix Acceptance (CGPA), a single
\emph{certified speculative-execution contract} for untrusted AI agents. An
untrusted proposal source --- a rule, a learned policy, or a frozen LLM ---
proposes a length-$K$ action prefix; a trusted verifier rejects any unsafe
transition, a trusted value boundary truncates prefixes that are safe but
costly, and CGPA commits the longest certified prefix and \emph{defers} the
rest to the oracle.

This contract's three parts are structurally necessary: remove any one and it
collapses onto a known weaker baseline
(Section~\ref{sec:method}). Any layer that turns
untrusted test-time compute into safe solver amortization must own
feasibility exactly, or the proposal source inherits a path to violations
(Proposition~\ref{prop:decoupling}); must budget excess cost per accepted
segment, or acceptance has no auditable price
(Certificate~\ref{thm:approx}); and must calibrate the one quantity it
estimates, or that budget is an assumption rather than a measurement
(Proposition~\ref{prop:conformal-bound}).

The contract is one object --- a certified prefix-acceptance operator ---
whose regret certificate charges the value-boundary error $\epsilon$
\emph{once per accepted segment}, not per step (Section~\ref{sec:method}).
This single accounting identity organizes the paper: safety never queries
$\epsilon$ (Section~\ref{sec:safety}); learning $\epsilon$ down shrinks regret
three orders of magnitude (Section~\ref{sec:quality}); a heterogeneous source
moves only which prefixes are accepted (Section~\ref{sec:diversity}); and each
accepted segment skips its oracle solves, yielding a $2.96\times$ wall-clock
speedup on deployment-scale unit commitment (Section~\ref{sec:deploy}).

\section{Certified Speculative Execution}
\label{sec:method}

\begin{figure*}[t]
\centering
\includegraphics[width=\textwidth]{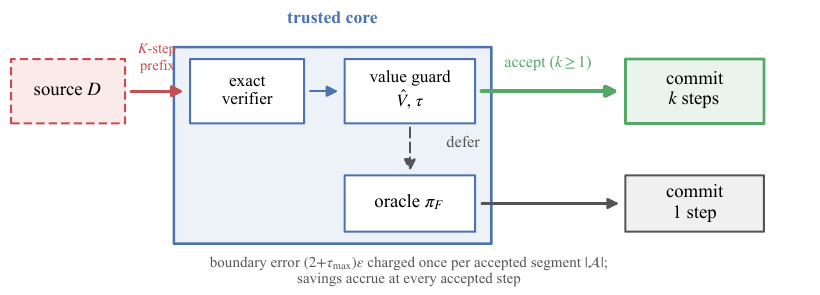}
\caption{The certified prefix-acceptance operator
(Definition~\ref{def:operator}). An untrusted source drafts a $K$-step
prefix; a thin trusted core (exact verifier, certificate-aware value
boundary, fallback oracle) commits the longest certified prefix and defers
the rest. Safety never queries the source; boundary error is charged once per
accepted segment while savings accrue at every accepted step.}
\label{fig:operator}
\end{figure*}

\subsection{Problem Setup}
We consider finite-horizon decision control with state $s_t$, action
$a_t$, transition $s_{t+1}=f(s_t,a_t)$, non-negative stage cost
$c(s_t,a_t)\ge0$, and hard feasibility constraints. The remaining horizon
is encoded in $s_t$, so the oracle cost-to-go $V(s_t)$ is written without a
time subscript. A trusted oracle policy $\pi_F$ returns feasible actions; the
reference baseline invokes $\pi_F$ at every step and applies only the first
action of each receding-horizon oracle solve, defining
\begin{equation}
J_F(s_0) = \sum_{t=0}^{T-1} c(s_t,\pi_F(s_t)).
\end{equation}
For a CGPA rollout with applied actions $a_t^{\mathrm{CGPA}}$,
$J_{\mathrm{CGPA}}(s_0)=\sum_{t}c(s_t,a_t^{\mathrm{CGPA}})$, and we report
the fallback-relative normalized regret used in every table,
\begin{equation}
R_{\mathrm{rel}} =
\frac{J_{\mathrm{CGPA}}(s_0)-J_F(s_0)}
{\max(|J_F(s_0)|,10^{-12})}.
\label{eq:regret}
\end{equation}
CGPA preserves the oracle's safety within an auditable regret budget while
moving it off the per-step critical path.

\subsection{The Accept/Defer Contract}
\label{sec:cgpa}

\begin{definition}[Certified prefix-acceptance operator]
\label{def:operator}
A CGPA controller is a tuple $(D,\mathrm{Verify},\hat V,\pi_F)$ --- an
untrusted proposal source, a trusted per-transition feasibility check, a
lightweight value boundary, and a trusted oracle policy --- read as a single
\emph{certified prefix-acceptance operator} $\mathcal{C}$ over the trusted core
$(\mathrm{Verify},\hat V,\pi_F)$, parameterized by the untrusted source $D$
(Figure~\ref{fig:operator}): at each state it
commits the longest verified-safe proposed prefix admitted by the value
guard \eqref{eq:value_guard}, and otherwise defers to a single verified
oracle repair. Only $\mathrm{Verify},\hat V,\pi_F$ are trusted; the proposal
source $D$ is \emph{untrusted and may be adversarial}, and every guarantee
below depends only on $\mathrm{Verify},\hat V,\pi_F$, never on $D$.
$\mathcal{C}$ maintains two invariants at every executed step:
(\textbf{I1})~\emph{viability}, every executed transition lies in the
verifier-admissible set; and (\textbf{I2})~\emph{value descent}, every
accepted segment obeys the local descent condition \eqref{eq:dissipation}
on $\hat V$. The proposal source occupies the untrusted slot and enters
neither invariant.
\end{definition}

At state $s_t$, the proposal source proposes $K$ actions
$u_{0:K-1}=D(s_t,K)$. A trusted verifier simulates the prefix and stops at
the first unsafe transition. Among the verified-safe prefix lengths, CGPA
selects the longest prefix satisfying
\begin{equation}
g(s_t,u_{0:k-1})+\hat V(s_{t+k})
\leq \hat V(s_t)+\tau |\hat V(s_t)| ,
\label{eq:value_guard}
\end{equation}
where $g(s_t,u_{0:k-1})$ is the verified prefix cost and $\tau$ is the
value tolerance (the absolute value enables scaling with the boundary
magnitude). If $k>0$ the controller executes the accepted prefix; if $k=0$
it defers, applying one verified oracle repair.

Equation~\eqref{eq:value_guard} is the boundary-estimated form of a
\emph{per-segment regret budget}: it asks that the realized prefix cost not
exceed the drop in cost-to-go $\hat V(s_t)-\hat V(s_{t+k})$ by more than a
fraction $\tau$ of the local value scale --- exactly the condition that
\emph{telescopes} into a bounded episode-level certificate
(Section~\ref{sec:certificate}). For non-negative prefix cost it is, after
rearrangement, a local \emph{value-descent condition}
\begin{equation}
\hat V(s_{t+k}) - \hat V(s_t)\;\le\; -\,g(s_t,u_{0:k-1}) + \sigma_t,
\quad \sigma_t := \tau\,|\hat V(s_t)| ,
\label{eq:dissipation}
\end{equation}
read in the cost-to-go convention $\hat V\ge 0$ (so $-g\le0$,
$\sigma_t\ge0$).

The value boundary may be precomputed, learned, or forecast-based, provided its online query cost is small relative to the
oracle solve. The guarantees rest on three assumptions separating trusted
from untrusted components.

\begin{assumption}[Trusted oracle]
\label{asm:fallback}
At every reachable state $s$, $\pi_F$ returns a feasible action that the
verifier admits. A safe repair is always available.
\end{assumption}

\begin{assumption}[Bounded boundary error]
\label{asm:bounded}
The value boundary satisfies $|\hat V(s)-V(s)|\le\epsilon$ at the endpoint
states of accepted segments. This is a reachable-set bound we audit
empirically over the accepted-segment endpoints reached under CGPA
(Section~\ref{sec:learned}); the
exact-boundary results take $\epsilon=0$.
\end{assumption}

\begin{assumption}[Arbitrary proposal source]
\label{asm:drafter}
The proposal source $D$ is unconstrained and may be rule-based,
forecast-based, learned, or LLM-based. CGPA makes no safety or optimality
assumption on $D$; every guarantee derives from $\mathrm{Verify}$,
$\hat V$, and $\pi_F$.
\end{assumption}

\subsection{Safety Holds for Any Proposal Source}

CGPA never applies a proposed transition the trusted verifier has not
accepted. When the verifier rejects at length zero, the oracle supplies a
repair, which is itself verified before execution.

\begin{proposition}[Applied safety]
\label{prop:safety}
Under Assumption~\ref{asm:fallback}, every transition applied by CGPA is
feasible, for any proposal source (Assumption~\ref{asm:drafter}).
\end{proposition}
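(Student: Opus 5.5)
The argument is a case analysis on the two ways a step of the operator $\mathcal{C}$ (Definition~\ref{def:operator}) can execute, followed by induction over the executed steps of an episode. The invariant to carry is (\textbf{I1}): every executed transition lies in the verifier-admissible set. Take the verifier-admissible set as the definition of feasibility, as the paper's setup implicitly does. Applied safety then reduces to showing that nothing reaches the plant without passing $\mathrm{Verify}$.

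Fix a decision state $s_t$ reached by $\mathcal{C}$, and let $u_{0:K-1}=D(s_t,K)$ be arbitrary, possibly adversarial (Assumption~\ref{asm:drafter}). Let $k^\star$ be the verified-safe length: the verifier simulates $s_{t+j+1}=f(s_{t+j},u_j)$ from $s_t$ and stops at the first rejected transition. The value guard \eqref{eq:value_guard} only selects among lengths $k\le k^\star$, and it can only shorten the prefix. This is the key structural point: the guard acts on lengths already certified by the verifier, so safety never queries $\hat V$ and holds regardless of $\epsilon$.

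\emph{Case $k>0$.} The executed transitions are exactly $(s_{t+j},u_j)$ for $j<k\le k^\star$. Because $f$ is deterministic, the executed states coincide with the states the verifier simulated, so each of these transitions was individually admitted. \emph{Case $k=0$.} Here $\mathcal{C}$ applies the oracle repair $\pi_F(s_t)$. The state $s_t$ is reachable, since by the induction hypothesis it was produced by admitted transitions from $s_0$. Assumption~\ref{asm:fallback} therefore guarantees that $\pi_F(s_t)$ is feasible and admitted, and the operator additionally verifies it before execution. In both cases the next decision state is again reachable via admitted transitions, which closes the induction over the finite horizon $T$.

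I expect the main obstacle to be the $k>0$ case. One has to be explicit that the verifier's simulation matches the realized trajectory, which needs determinism of $f$ or a verifier that checks against the true dynamics. One also has to state that "feasible" means "verifier-admitted", so that exactness of $\mathrm{Verify}$ is what is being assumed. I would state both points as the interpretation of a trusted, exact verifier in Definition~\ref{def:operator}, rather than hide them. Finally, I would note that no step of the argument referenced $D$ or $\hat V$, which gives the "for any proposal source" clause.
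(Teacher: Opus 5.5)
Your proposal is correct and follows the paper's own argument. Both split on whether a verified prefix is committed ($k>0$) or the controller defers to a verified oracle repair ($k=0$), and conclude that no unverified transition is ever applied, independent of $D$ and $\hat V$; your reachability induction, the appeal to determinism of $f$, and reading feasibility as verifier-admissibility make explicit what the paper's short proof leaves implicit.
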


\begin{proof}
A proposed action is applied only if the verifier admits it; on rejection at
length zero, the controller applies an oracle repair, which
Assumption~\ref{asm:fallback} guarantees exists and is verified first. No
unverified transition is ever applied, regardless of $D$.
\end{proof}

The next proposition shows the decoupling is necessary, not stylistic: the
feasibility gate cannot be folded into the learned value boundary.

\begin{proposition}[Exact feasibility gating]
\label{prop:decoupling}
If feasibility is decided by an exact, oracle-owned gate, then applied
safety (Proposition~\ref{prop:safety}) holds for every proposal source and
every boundary error $\epsilon$, because that argument never queries
$\hat V$. If feasibility is instead decided by a non-conservative
$\epsilon$-accurate surrogate, then applied safety need not be independent
of the boundary error: take a state with feasible $u_1$ and infeasible
$u_2$ thresholded at level $\theta$; if the true score of $u_2$ is
$\theta-\delta$ with $0<\delta<\epsilon$, the surrogate may admit $u_2$
while an exact test rejects it for every $\epsilon$. Hence the learned
value boundary cannot double as the feasibility gate.
\end{proposition}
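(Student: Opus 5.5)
The argument has two independent halves: a positive part that reuses the proof of Proposition~\ref{prop:safety}, and a negative part given by an explicit counterexample.

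\textbf{Positive part.} I will audit the proof of Proposition~\ref{prop:safety} and record every object it queries. Its case split is the following:
\begin{itemize}
\item \textbf{Case 1: a proposed prefix is applied.} Each of its transitions was admitted by $\mathrm{Verify}$. The value guard \eqref{eq:value_guard} only truncates further an already verified prefix, so it selects a sub-prefix and can never extend acceptance past the first unsafe transition.
\item \textbf{Case 2: the controller defers.} The applied action is $\pi_F(s_t)$. By Assumption~\ref{asm:fallback} it exists and is verifier-admitted.
\end{itemize}
Hence the event ``an applied transition is infeasible'' is decided entirely by $\mathrm{Verify}$ and $\pi_F$. The set of applied transitions is contained in the verifier-admissible set for every choice of $\hat V$, including an adversarial $\hat V$ with arbitrary $\epsilon$. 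Since Assumption~\ref{asm:drafter} places no restriction on $D$, the conclusion also holds uniformly over $D$. Formally, applied safety is a property of the pair $(\mathrm{Verify},\pi_F)$ and is invariant under replacing $\hat V$ or $D$. This is invariant (I1) of Definition~\ref{def:operator}.

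\textbf{Negative part.} I will make the surrogate gate precise. Let $h(s,u)$ be a true feasibility score with the convention that $u$ is feasible iff $h(s,u)\ge\theta$. Let $\hat h$ be any surrogate satisfying $|\hat h-h|\le\epsilon$, and let the surrogate gate admit $u$ iff $\hat h(s,u)\ge\theta$. Non-conservative means the gate is not tightened to $\hat h\ge\theta+\epsilon$.

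Fix a state $s$ and two actions with the following scores:
\begin{itemize}
\item $h(s,u_1)\ge\theta$, so $u_1$ is feasible;
\item $h(s,u_2)=\theta-\delta$ with $0<\delta<\epsilon$, so $u_2$ is infeasible.
\end{itemize}
Choose $\hat h(s,u_2)=\theta-\delta+\epsilon'$ with $\delta\le\epsilon'\le\epsilon$. This is $\epsilon$-accurate and gives $\hat h(s,u_2)\ge\theta$, so the surrogate admits $u_2$.

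Now let an adversarial source $D$ propose $u_2$. This is allowed by Assumption~\ref{asm:drafter}. Under the surrogate gate CGPA applies an infeasible transition. The exact test $h(s,u_2)\ge\theta$ rejects $u_2$ regardless of $\epsilon$.

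To show dependence on $\epsilon$, note that as $\epsilon\to0$ the admissible $\delta$-window closes, and the failure disappears only in that limit. So whether safety holds is a function of $\epsilon$, contrary to the exact-gate case.

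\textbf{Conclusion and main obstacle.} Together the two parts show that the learned $\hat V$ cannot serve as the feasibility gate. The only delicate step is the precise meaning of ``non-conservative'': the counterexample needs the surrogate threshold not to be shifted by $\epsilon$. I will state this explicitly. I will also remark that a conservative gate restores safety only by requiring a known $\epsilon$ bound, which is again a dependence on the boundary error, so the claim of non-independence still stands.
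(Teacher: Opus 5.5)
Your proposal is correct and follows the paper's own argument, which the paper gives inline in the statement rather than in the appendix. Safety in Proposition~\ref{prop:safety} is decided by $\mathrm{Verify}$ and $\pi_F$ alone and never queries $\hat V$; the $\theta-\delta$ example shows that an $\epsilon$-accurate, non-conservative surrogate may admit the infeasible $u_2$, and your explicit choice $\hat h(s,u_2)=\theta-\delta+\epsilon'$ with $\delta\le\epsilon'\le\epsilon$ just makes that ``may admit'' concrete. One small precision: for a fixed instance the failure already disappears once $\epsilon<\delta$ (then $\hat h(s,u_2)<\theta$), which is the cleanest witness of $\epsilon$-dependence, whereas ``disappears only in the limit $\epsilon\to0$'' holds only uniformly over instances.
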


\subsection{Regret Certificate}
\label{sec:certificate}

Let $V(s)=V^{\pi_F}(s)$ be the finite-horizon oracle cost-to-go, terminal
value zero.

\begin{definition}[Accepted and repair segments]
\label{def:segments}
A CGPA rollout partitions into consecutive segments whose end state is the
next start state. An \emph{accepted segment} $i$ starts at $s_i$, executes
$k_i\ge1$ verified actions admitted by \eqref{eq:value_guard}, incurs cost
$g_i$, ends at $s_{i+1}$, and uses tolerance $\tau_i$ (all reported
experiments fix $\tau_i=\tau$; the bound is stated for general $\tau_i$). A
\emph{repair segment} $j$ is a single oracle action applied on deferral,
with cost $f_j$ from $s_j$ to $s_{j+1}$. The accepted and repair sets are
$\mathcal A$ and $\mathcal R$.
\end{definition}

For an exact value $V$, \eqref{eq:value_guard} specializes to admitting a
segment when $g_i + V(s_{i+1}) \leq V(s_i)+\tau_i |V(s_i)|$.

\begin{lemma}[Per-segment value decomposition]
\label{lem:segment}
Each accepted segment obeys
$g_i \leq V(s_i)-V(s_{i+1})+\tau_i|V(s_i)|$; each repair segment obeys
$f_j \leq V(s_j)-V(s_{j+1})+\rho_j$, where $\rho_j\ge0$ is the realized
repair slack defined by $f_j+V(s_{j+1})\leq V(s_j)+\rho_j$.
\end{lemma}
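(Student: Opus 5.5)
The lemma is essentially a rearrangement of the acceptance test and of the definition of $\rho_j$, so I will treat the two segment types separately.

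\emph{Accepted segments.} Fix an accepted segment $i\in\mathcal A$. Under the exact-boundary reading ($\hat V=V$, i.e.\ $\epsilon=0$ in Assumption~\ref{asm:bounded}), the controller admitted the prefix only because the guard \eqref{eq:value_guard} held. With the segment notation of Definition~\ref{def:segments}, this reads $g_i+V(s_{i+1})\le V(s_i)+\tau_i|V(s_i)|$. Here $g_i=g(s_i,u_{0:k_i-1})$ is the verified prefix cost, which equals the realized cost. This equality is the one point to check. The verifier simulates the same deterministic transition $f$ that is then executed, and by Proposition~\ref{prop:safety} the applied transitions are exactly the verified ones. So the simulated cost and endpoint coincide with the realized $g_i$ and $s_{i+1}$. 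Subtracting $V(s_{i+1})$ from both sides gives the claimed inequality.

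\emph{Repair segments.} For a repair segment $j\in\mathcal R$, the bound holds by definition of $\rho_j$. I take $\rho_j:=\max\{0,\,f_j+V(s_{j+1})-V(s_j)\}$, the smallest nonnegative slack with $f_j+V(s_{j+1})\le V(s_j)+\rho_j$. It is well defined because $V$ is finite on reachable states, which follows from the finite horizon, zero terminal value, and Assumption~\ref{asm:fallback}. Rearranging gives $f_j\le V(s_j)-V(s_{j+1})+\rho_j$.

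I would also remark that $\rho_j$ is typically $0$. The repair applies $\pi_F(s_j)$, and $V=V^{\pi_F}$ satisfies the policy Bellman identity $V(s_j)=c(s_j,\pi_F(s_j))+V(f(s_j,\pi_F(s_j)))$, so $f_j+V(s_{j+1})=V(s_j)$. Nonzero slack arises only if the receding-horizon oracle is not exactly the policy defining $V$. This remark is not needed for the lemma itself.

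The only real obstacle is the bookkeeping identification of the guard's quantities with the realized ones. If the lemma is instead read with a general $\hat V$, I would carry the substitution $|\hat V-V|\le\epsilon$ at segment endpoints into the later certificate rather than here. Doing so adds $2\epsilon$ per accepted segment, plus a $\tau_i$-dependent $\epsilon$ term from $|\hat V(s_i)|\le|V(s_i)|+\epsilon$. This is consistent with the paper's once-per-segment accounting.
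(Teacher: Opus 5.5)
Your proposal is correct and follows the paper's proof essentially verbatim. You rearrange the exact guard for accepted segments, define $\rho_j=\max\{0,\,f_j+V(s_{j+1})-V(s_j)\}$ so the repair bound holds by construction, and note that the Bellman identity gives $\rho_j=0$ when the repair applies $\pi_F$ at the same remaining horizon; your extra check that the verified prefix cost and endpoint coincide with the realized ones (under deterministic $f$) makes explicit a detail the paper leaves implicit, without changing the argument.
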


\begin{theorem}[Exact-boundary fallback-relative bound]
\label{thm:certificate}
Under Assumption~\ref{asm:fallback}, if every accepted segment satisfies the
exact guard, the episode excess cost satisfies
\begin{equation}
J_{\mathrm{CGPA}}(s_0)-J_F(s_0)
\leq \sum_{i\in\mathcal A}\tau_i |V(s_i)|
 + \sum_{j\in\mathcal R}\rho_j .
\end{equation}
\end{theorem}

Summing Lemma~\ref{lem:segment} over all segments telescopes to
$V(s_0)$; dividing by $|J_F|$ recovers the normalized
regret~\eqref{eq:regret} (full proof in the appendix).
The deployed controller runs \eqref{eq:value_guard} with
$\hat V$, not $V$. The next result propagates the boundary error through the
same argument.

\begin{theorem}[Approximate-boundary fallback-relative bound]
\label{thm:approx}
Under Assumptions~\ref{asm:fallback} and~\ref{asm:bounded} with
$\tau_{\max}=\max_i\tau_i$, if every accepted segment satisfies the deployed
rule \eqref{eq:value_guard}, then
\begin{equation}
\begin{aligned}
J_{\mathrm{CGPA}}(s_0)-J_F(s_0)
\leq{}& \sum_{i\in\mathcal A}\tau_i|V(s_i)|\\
 &+ (2+\tau_{\max})\epsilon |\mathcal A|
 + \sum_{j\in\mathcal R}\rho_j .
\end{aligned}
\end{equation}
\end{theorem}

The boundary error enters once per accepted segment and adds
$(2{+}\tau_{\max})\epsilon|\mathcal A|$ (full proof in the supplementary
material).

\begin{proposition}[Boundary error is charged per accepted segment]
\label{prop:eps-necessary}
The substitution proving Certificate~\ref{thm:approx} introduces boundary
error only at accepted-segment endpoints --- interior states telescope and
cancel --- so $\epsilon$ enters once per accepted segment, giving the
$O(\epsilon|\mathcal A|)$ term. The error cost thus scales with the number of
trusted-boundary crossings $|\mathcal A|$, not the interior steps a segment
spans: each charge stands in for the whole run of stepwise oracle solves the
segment covers, amortized over strictly more eliminated solves whenever a
segment spans multiple steps.
\end{proposition}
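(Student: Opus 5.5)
The plan is to re-run the telescoping argument behind Certificate~\ref{thm:certificate}, but with the deployed guard \eqref{eq:value_guard} stated in terms of $\hat V$. At each step I will record exactly where $\hat V$ has to be exchanged for $V$, and show that this happens only at the two endpoints $s_i,s_{i+1}$ of each accepted segment. The first step fixes the segment decomposition of Definition~\ref{def:segments}. Within an accepted segment, the guard is evaluated once, on the aggregate prefix cost $g_i$ and the endpoint values $\hat V(s_i),\hat V(s_{i+1})$. The intermediate states $s_{t+1},\dots,s_{t+k_i-1}$ never appear in the guard, so no boundary value is queried there. Repair segments are handled exactly as in Lemma~\ref{lem:segment}: they use the true $V$ and the realized slack $\rho_j$, and involve no $\hat V$ at all.

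The second step is the substitution itself, carried out in the exact form used in Certificate~\ref{thm:approx}. Starting from $g_i\le \hat V(s_i)-\hat V(s_{i+1})+\tau_i|\hat V(s_i)|$, I apply Assumption~\ref{asm:bounded} at the two endpoints. This gives $\hat V(s_i)\le V(s_i)+\epsilon$ and $-\hat V(s_{i+1})\le -V(s_{i+1})+\epsilon$. For the tolerance term, the triangle inequality gives $|\hat V(s_i)|\le |V(s_i)|+\epsilon$. Together these yield
\[
g_i\le V(s_i)-V(s_{i+1})+\tau_i|V(s_i)|+(2+\tau_i)\epsilon .
\]
Summing over all accepted and repair segments, the $V$ terms telescope along the partition. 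Each end state coincides with the next start state, so every interior value cancels and only $V(s_0)-V(s_T)=V(s_0)=J_F(s_0)$ survives. What is left is one $(2+\tau_i)\epsilon\le(2+\tau_{\max})\epsilon$ per element of $\mathcal A$, which is the claimed $O(\epsilon|\mathcal A|)$ term.

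The third step handles the amortization reading. The term is independent of $k_i$. Each accepted segment also replaces $k_i$ receding-horizon solves of $\pi_F$ by one verified acceptance. So whenever $k_i\ge2$, the single charge is spread over at least two eliminated solves, and the number of eliminated solves, $\sum_i k_i$, exceeds $|\mathcal A|$ strictly whenever some $k_i>1$.

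I expect the main obstacle to be justifying that the accounting is genuinely per segment and not hiding interior error. The key point is that telescoping is done on $V$, not $\hat V$. If one instead telescoped $\hat V$ across segments and converted to $V$ only at the end, adjacent segments would share endpoints and the count would come out differently. I would also argue that the charge cannot be made smaller in general: a one-segment example with $\hat V=V+\epsilon$ at $s_i$ and $\hat V=V-\epsilon$ at $s_{i+1}$ attains the $2\epsilon$ part. This shows that $\epsilon$ is genuinely incurred once per boundary crossing, while the interior steps contribute nothing.
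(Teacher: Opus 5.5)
Your proof is correct and follows the paper's own argument: substitute Assumption~\ref{asm:bounded} at the two endpoints of each accepted segment to get the slack $(2+\tau_i)\epsilon\le(2+\tau_{\max})\epsilon$, telescope in $V$ as in Certificate~\ref{thm:certificate}, and read the amortization off $\sum_i k_i>|\mathcal A|$ when some $k_i\ge2$. One caution on your optional tightness aside: consecutive accepted segments share endpoints whose $\hat V$-errors cancel. Telescoping $\hat V$ over a run of accepted segments therefore gives the sharper charge $2\epsilon+\sum_i\tau_i\epsilon$ per run rather than a flawed count, so your one-segment example only shows that the per-segment inequality is locally tight, and the full $2\epsilon$ per segment is realized episode-wide only when accepted segments are separated by repairs --- this is harmless here, since the proposition is only an upper-bound accounting claim.
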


A single operator (Definition~\ref{def:operator}) forces these components
together once safety, bounded regret, and a learnable boundary are required;
remove any one and CGPA collapses onto a known weaker baseline
(Proposition~\ref{prop:decoupling}).
Section~\ref{sec:experiments} tests these structural consequences one by one.

\subsection{Certificate-Aware Learning of the Value Boundary}
\label{sec:learned}

We instantiate $\hat V$ with a learned five-quantile MLP that outputs levels
$\{\hat V_{0.1},\hat V_{0.3},\hat V_{0.5},\hat V_{0.7},\hat V_{0.9}\}$ ---
a calibrated central band with tail levels for the error spread. The median
drives the value guard \eqref{eq:value_guard} and the spread
$\hat V_{0.9}-\hat V_{0.1}$ supplies a state-dependent estimate of the
$\epsilon$ in Certificate~\ref{thm:approx}. The training objective is
certificate-driven: the only boundary-dependent term of
Certificate~\ref{thm:approx} is $(2+\tau_{\max})\epsilon|\mathcal A|$, and
calibration (Proposition~\ref{prop:conformal-bound}) makes $\epsilon(s)$ a
fixed multiple of the quantile spread $\hat V_{0.9}(s)-\hat V_{0.1}(s)$, so we
add a mean-spread penalty on top of the coverage loss:
\begin{equation}
\mathcal L \;=\; \underbrace{\mathcal L_{\text{pinball}}}_{\text{coverage}}
\;+\; \lambda \cdot
\underbrace{\mathbb E_s\bigl[\hat V_{0.9}(s)-\hat V_{0.1}(s)\bigr]}_{\text{mean }\epsilon\text{ estimate}} ,
\label{eq:cert_loss}
\end{equation}
the pinball term \cite{koenker1978regression} preserving quantile coverage and
the spread penalty shrinking Certificate~\ref{thm:approx}'s $\epsilon$ term.
The penalty is essential: removing it ($\lambda{=}0$) inflates EMS regret
by $1.46\times$ at matched call reduction (we use $\lambda=0.05$; full
sweep in the appendix). Because $\hat V$ enters only the value
guard, a poor boundary inflates regret but cannot induce a violation
(Propositions~\ref{prop:safety}--\ref{prop:decoupling}); learning is thus free
to optimize the frontier with no safety risk.

To instantiate the $\epsilon$ of Assumption~\ref{asm:bounded} as a
measured quantity, we apply split-conformal calibration
\cite{vovk2005algorithmic,lei2018distributionfree,angelopoulos2023gentle}
to the quantile-spread estimate on held-out oracle rollouts, with
non-conformity score
$r_i=|y_i-\hat V_{0.5}(s_i)|/(\hat V_{0.9}(s_i)-\hat V_{0.1}(s_i))$, where
$y_i=V(s_i)$ is the realized oracle cost-to-go at calibration state $s_i$. This
yields a state-dependent band
$[\hat V_{0.5}(s)\pm\alpha\,(\hat V_{0.9}(s)-\hat V_{0.1}(s))]$, calibrated
to $80\%$ and $90\%$ targets and attaining $0.816$--$0.853$ and $0.900$--$0.917$
empirical coverage across domains (appendix).
The calibrated multiplier $\alpha$ then defines a \emph{state-dependent}
error band $\epsilon(s)=\alpha\,(\hat V_{0.9}(s)-\hat V_{0.1}(s))$ that
instantiates the certificate's $\epsilon$ --- so the conformal band and the
certificate's error term are one quantity, and the spread penalty of
Eq.~\eqref{eq:cert_loss} shrinks precisely this band. Deployed as a gate, the
band enters the value guard pessimistically at both endpoints: a prefix is
admitted only if
\begin{equation}
\begin{aligned}
g(s_t,u_{0:k-1})+\hat V_{0.5}(s_{t+k})&+\epsilon(s_{t+k})\\
\leq \hat V_{0.5}(s_t)-\epsilon(s_t)&+\tau|\hat V_{0.5}(s_t)| ,
\end{aligned}
\label{eq:band_gate}
\end{equation}
the band-gate operating points of
Sections~\ref{sec:quality} and~\ref{sec:deploy}, yielding
a forward, high-probability guarantee.

\begin{proposition}[Conformal forward regret bound]
\label{prop:conformal-bound}
Let the calibration set be exchangeable with the deployed boundary states,
and choose the band multiplier $\alpha$ at miscoverage level $\delta/(T{+}1)$,
so that $\Pr[\,|\hat V_{0.5}(s)-V(s)|\le\epsilon(s)\,]\ge 1-\delta/(T{+}1)$ at
each reachable boundary state. Write
$\bar\epsilon_i=\max\{\epsilon(s_i),\epsilon(s_{i+1})\}$ for the endpoint
band of accepted segment $i$. Then, under Assumption~\ref{asm:fallback},
with probability at least $1-\delta$,
\begin{equation}
\begin{aligned}
J_{\mathrm{CGPA}}(s_0)-J_F(s_0)\le{}&
\sum_{i\in\mathcal A}\tau_i|V(s_i)|
+\sum_{j\in\mathcal R}\rho_j\\
&+(2+\tau_{\max})\sum_{i\in\mathcal A}\bar\epsilon_i .
\end{aligned}
\end{equation}
\end{proposition}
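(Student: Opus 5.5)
The plan is to reduce Proposition~\ref{prop:conformal-bound} to the deterministic argument of Certificate~\ref{thm:approx}, run with a state-dependent error in place of the uniform $\epsilon$, on a good event built by a union bound.

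\textbf{Good event.} Define
\[
E=\bigcap_{s\in\mathcal S_{\partial}}\bigl\{|\hat V_{0.5}(s)-V(s)|\le\epsilon(s)\bigr\},
\]
where $\mathcal S_{\partial}$ is the set of boundary states the rollout visits. Every accepted-segment endpoint is some $s_t$ with $t\in\{0,\dots,T\}$, so $|\mathcal S_{\partial}|\le T+1$. Each coverage event fails with probability at most $\delta/(T+1)$ by the exchangeability hypothesis. A union bound then gives $\Pr[E]\ge1-\delta$.

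The subtle point is that the visited states are random and depend on earlier acceptances. I would handle this by indexing the events by time step $t$ (``if $s_t$ is a boundary state then it is covered''). This gives at most $T+1$ events, each with marginal failure probability at most $\delta/(T+1)$ under the stated per-state coverage assumption. The main obstacle is justifying that per-state coverage holds for these adaptively reached states. I would take it directly from the hypothesis as phrased ("at each reachable boundary state") rather than prove it.

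\textbf{Deterministic bound on $E$.} For each accepted segment $i$, the admitted rule is either \eqref{eq:value_guard} with $\hat V=\hat V_{0.5}$ or the stricter band gate \eqref{eq:band_gate}, which implies \eqref{eq:value_guard}. So I would use
\[
g_i+\hat V(s_{i+1})\le \hat V(s_i)+\tau_i|\hat V(s_i)|.
\]
On $E$, substitute $\hat V(s_{i+1})\ge V(s_{i+1})-\epsilon(s_{i+1})$ and $\hat V(s_i)\le V(s_i)+\epsilon(s_i)$. Also use $|\hat V(s_i)|\le|V(s_i)|+\epsilon(s_i)$, which follows from the triangle inequality. This yields
\[
g_i\le V(s_i)-V(s_{i+1})+\tau_i|V(s_i)|+\epsilon(s_{i+1})+(1+\tau_i)\epsilon(s_i),
\]
and the error terms are at most $(2+\tau_{\max})\bar\epsilon_i$.

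Repair segments are handled by Lemma~\ref{lem:segment} without change, since it involves only $V$. Summing over all segments telescopes to $V(s_0)-V(s_T)=V(s_0)$, because the terminal value is zero. Finally, $V(s_0)=J_F(s_0)$, since $V=V^{\pi_F}$ is the cost-to-go of the stepwise oracle baseline. This gives the claimed inequality on $E$, and hence with probability at least $1-\delta$.
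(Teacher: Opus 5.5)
Your proposal is correct and matches the paper's proof essentially step for step. A union bound over the at most $T{+}1$ boundary states gives the joint coverage event with probability at least $1-\delta$; on that event the Certificate~2 substitution with endpoint-specific bands gives the per-segment excess $(1+\tau_i)\epsilon(s_i)+\epsilon(s_{i+1})\le(2+\tau_{\max})\bar\epsilon_i$, and repair segments plus telescoping finish exactly as in Certificate~1 (your time-indexing of the events and the observation that the band gate implies the median-based guard are small clarifications the paper leaves implicit).
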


A union bound at level $\delta/(T{+}1)$ over the $\le T{+}1$ boundary states
makes $|\hat V_{0.5}(s)-V(s)|\le\epsilon(s)$ hold jointly with probability
$\ge 1-\delta$; the Certificate~\ref{thm:approx} substitution then applies
with the endpoint bands (full proof in the appendix).
The regret certificate is a realized-reference audit needing no
exchangeability assumption; the conformal band calibrates $\epsilon$ before
deployment.

\section{Experiments}
\label{sec:experiments}

\subsection{Setup}
\label{sec:setup}

We evaluate CGPA across three sequential-decision settings spanning a
spectrum of oracle cost: a fast LP-backed \emph{battery energy management}
(EMS) problem on UCI household loads \cite{ucipower2012} (the primary
mechanism testbed), the real-data \emph{CityLearn 2022} multi-building HVAC
benchmark \cite{vazquezcanteli2019citylearn} (a real-calendar distribution
shift, appendix), and a MILP-backed deployment-scale
\emph{unit commitment} (UC) instance \cite{brown2018pypsa} (an expensive
oracle). The contract is held fixed; the untrusted proposal slot is the
variable we sweep, to test the invariance of the certificate's guarantees. Across the study
the proposal slot is filled by a deterministic forecast, a learned MLP
policy, an end-to-end reinforcement learner, adversarial drafters, and six
frozen LLMs (Qwen3-$1.7$B/$4$B/$8$B; Gemma-4-E2B/E4B/12B), exercising the
arbitrariness of Assumption~\ref{asm:drafter}.

Unless stated otherwise EMS episodes use a $24$-step horizon and prefix
length $K{=}4$, and the value tolerance $\tau$ --- the per-segment
regret-budget parameter of Certificate~\ref{thm:approx} --- is held fixed
within a setting; its regret--call-reduction response is smooth and monotone
(appendix), so it directly selects the operating point.
Throughout, violations count applied constraint breaches, regret is
$R_{\mathrm{rel}}$ of Eq.~\eqref{eq:regret} (p95: its $95$th percentile), and
call reduction is the relative drop in trusted-oracle invocations versus
stepwise fallback. The learned boundary is the five-quantile
certificate-aware value boundary of Section~\ref{sec:learned}, run via
per-state online CPU inference inside the timed loop; confidence intervals
are nonparametric episode-level bootstrap. The experiments share one design,
\emph{controlled frontier slicing}: each table fixes all but one part of the
accept/defer system (value boundary, acceptance mechanism, proposal source,
or fallback oracle), so it slices a different term of
Certificate~\ref{thm:approx}.

\subsection{The Contract Neutralizes Arbitrary Proposal Sources}
\label{sec:safety}

Safety is the $|\mathcal A|\!\to\!0$ limit of the certificate: it never
queries $\epsilon$ (Proposition~\ref{prop:decoupling}), so the contract
drives \emph{any} untrusted source to zero applied violations --- a guarantee
no direct AI replacement provides. We fill the EMS proposal slot with
adversarial hand-built drafters, an end-to-end learner, and frozen LLMs,
comparing direct execution against the same source wrapped by CGPA
(appendix).

The pattern is uniform. Adversarial drafters and the two larger Gemma models
produce parseable but constraint-violating actions in $98$--$100\%$ of
episodes when executed directly; the contract admits none of those unsafe
transitions and defers, reaching zero applied violations while collapsing
regret by orders of magnitude --- the empirical counterpart of
Proposition~\ref{prop:safety}, where safety is owned by the verifier/oracle
pair regardless of how proposals are generated. The regret bound also holds:
a low-quality source has few prefixes accepted, so $|\mathcal A|$ --- and the
certificate's $(2+\tau_{\max})\epsilon|\mathcal A|$ term --- shrinks toward
zero and CGPA reverts to stepwise fallback at unchanged guarantees.

The end-to-end learner makes the converse point: replacing the oracle
with a trained policy removes oracle calls but pays $0.503$ mean regret ---
about $7\times$ the same learner used as a proposal source inside the contract
($0.074$). Direct AI replacement is not a substitute for the contract; it
forfeits either safety or regret.

\paragraph{Safety survives intentional misuse.} A white-box adaptive adversary
--- one that re-plans every step with full knowledge of the verifier, value
boundary, dynamics, and cost to maximize constraint violations --- drives raw
execution to a breach in every episode on both EMS and unit commitment, yet
CGPA admits none: it defers at every step, so the attack is absorbed by lost
amortization (call reduction falls to zero), never by an applied violation
(appendix). The feasibility gate never queries the value
boundary the adversary attacks, so safety is owned by the verifier, not the
proposal source, however adversarial. The same holds against a \emph{learned}
attacker: a white-box MLP that optimizes feasible high-cost prefixes against the
contract drives feasible-cost regret above the forecast drafter, yet the frozen
contract still holds it to zero applied violations with regret inside the
certificate (appendix).

\subsection{Learned Value Boundaries Lift the Risk--Cost Frontier}
\label{sec:quality}

Quality has one structural source: the certificate is \emph{one-sided}. It
prices the boundary error of accepted coverage at $(2+\tau_{\max})\epsilon$
per segment, bounds only the excess over the fallback, and never references
oracle optimality (Assumption~\ref{asm:fallback}). With safety owned by the
verifier, $\epsilon$ is a free optimization target: learning the boundary
down shrinks the price of coverage, and a capable source pushes accepted
quality past the stepwise reference.
We trace the EMS risk--cost frontier under a fixed forecast proposal source
($n{=}100$), changing only the value boundary (appendix). Plain
prefix acceptance --- a multi-step safe filter with no value guard --- accepts
everything and pays $0.4450$ mean regret, identically at every draft
length $K$ (appendix); end-to-end RL replacement pays
$0.503$ (appendix). The certificate-aware learned guard
cuts this to $0.0063$ at $60.4\%$ call reduction, and its conformally
calibrated band gate drives mean regret to $\mathbf{0.0002}$ --- three
orders of magnitude below unguarded acceptance, within sampling noise of the
stepwise oracle ($95\%$ CI $[-0.005,0.006]$, spanning zero) --- the
lowest mean regret and best p95 across the EMS value-boundary sweep at $\tau{=}0.04$. It also outperforms non-learned
boundaries and reproduces on the real-data CityLearn benchmark ($0.0104$ mean
regret, zero violations; appendix). This is the certificate's
$(2+\tau_{\max})\epsilon|\mathcal A|$ term made small, and the
realized-reference audit confirms it: the envelope covers $93$--$100\%$ of
accepted segments, at zero applied violations throughout.

\paragraph{Certified acceptance versus full-plan approximation.} The next
slice fixes boundary and source and varies the acceptance mechanism. We
route an imitation network trained on oracle sequences --- the
safety-augmented neural MPC baseline of \citet{hose2023approxnmpc} --- through
two controllers. Full-plan
acceptance commits the whole sequence as one accept/reject decision and
repairs to a safe candidate on infeasibility, paying $0.0227$ mean regret;
CGPA's certified prefix acceptance, on the same network, truncates at the
value boundary and cuts regret to $0.0081$ ($2.8\times$) at $86.6\%$ call
reduction, or holds a $98.0\%$ high-amortization point at a fraction of the
verifier load (appendix). This gap is the per-segment
$(2+\tau_{\max})\epsilon$ pricing of Certificate~\ref{thm:approx} at fixed
boundary and source.

\paragraph{Distribution shift: the stepwise oracle is overtaken.} The learned
boundary generalizes beyond its training distribution. We swap in a learned
MLP \emph{drafter} as an out-of-distribution (OOD) proposal source under
calendar shift: it is trained on one season and deployed on $18$ held-out
OOD days, against the forecast drafter on the same days (five training
seeds), with the certified contract unchanged.
Table~\ref{tab:ood} reports the result. The learned proposal source raises
mean call reduction by $+18.8$pp (day-clustered $95\%$ CI $[+14.9,+22.1]$,
bootstrapped over held-out days) while \emph{lowering} mean regret by $0.055$,
at zero violations across all $90$ day--seed runs.
The sign is decisive: the accepted prefixes beat the stepwise oracle's
own cost on $15$ of $18$ days ($72$ of $90$ day--seed runs; supplementary
material), while in distribution they win exactly $50\%$ --- the expected rate
against a matched reference. Under shift the contract overtakes it by
committing multi-step prefixes its myopic per-step replanning cannot.

\begin{table}[t]
\centering
\caption{Out-of-distribution learned proposal generalization on EMS
($18$ held-out calendar-shifted days, five seeds, $90$ paired day--seed
runs); deltas are paired with $95\%$ bootstrap CIs. Negative regret means the
accepted prefixes outperform the oracle's stepwise plan; ``days $<$ oracle''
counts held-out days on which the contract beats the stepwise oracle's own
cost. Zero violations throughout; $\tau{=}0.12$.}
\label{tab:ood}
\scriptsize
\setlength{\tabcolsep}{2pt}
\begin{tabular*}{\columnwidth}{@{\extracolsep{\fill}}lrrrr@{}}
\toprule
Proposal source & Call red. & Mean reg. & Days $<$ oracle & Viol. \\
\midrule
Forecast (OOD)          & $57.4\%$ & $-0.082$ & $12/18$ & $0$ \\
Learned (OOD)           & $76.2\%$ & $-0.137$ & $15/18$ & $0$ \\
\midrule
$\Delta$ (L$-$F) & $+18.8$pp & $-0.055$ & --- & $0$ \\
\quad 95\% CI  & $[+14.9,+22.1]$ & $[-0.098,-0.012]$ & --- & --- \\
\bottomrule
\end{tabular*}
\end{table}

\subsection{One Contract, Six Heterogeneous LLM Proposal Sources}
\label{sec:diversity}

Frozen LLMs are the most heterogeneous proposal sources available --- varying
in scale, family, and decoding behavior with no domain-specific training.
Because the certificate charges per accepted segment, such a source moves only
\emph{which} prefixes are accepted, not the contract's guarantees.
Table~\ref{tab:llm} fills the EMS proposal slot with six of them (same-seed
$n{=}50$, deterministic decoding, $\tau{=}0.08$).

\begin{table*}[t]
\centering
\caption{Six heterogeneous frozen LLM proposal sources on battery EMS
(same-seed $n{=}50$, $\tau{=}0.08$, deterministic decoding). The same
certified contract absorbs every model at \emph{zero} applied violations with
one shared prompt template. Bracketed values are
$95\%$ bootstrap CIs over the $50$ episodes; the ratio is omitted where
direct execution is violation-dominated.}
\label{tab:llm}
\scriptsize
\setlength{\tabcolsep}{5pt}
\begin{tabular*}{\textwidth}{@{\extracolsep{\fill}}lrrrrrr@{}}
\toprule
Model & Direct reg. [CI] & CGPA reg. [CI] & Ratio & CGPA call red. [CI] & Direct viol. & CGPA viol. \\
\midrule
Qwen3-1.7B   & $0.421\,[0.408,0.435]$ & $0.068\,[0.053,0.083]$ & $6.23\times$ & $68.1\%\,[66.3,69.8]$ & $0$ & $0$ \\
Qwen3-4B     & $0.353\,[0.341,0.366]$ & $0.061\,[0.048,0.074]$ & $5.80\times$ & $69.8\%\,[67.8,71.8]$ & $0$ & $0$ \\
Qwen3-8B     & $0.349\,[0.337,0.362]$ & $0.032\,[0.021,0.045]$ & $10.94\times$ & $65.5\%\,[63.5,67.5]$ & $0$ & $0$ \\
Gemma-4-E2B  & $0.362\,[0.344,0.383]$ & $0.094\,[0.082,0.107]$ & $3.84\times$ & $61.5\%\,[58.8,64.3]$ & $0$ & $0$ \\
Gemma-4-E4B  & $200.0\,[195.9,204.4]$ & $0.020\,[0.014,0.027]$ & --- & $53.4\%\,[51.2,55.7]$ & $1.00$ & $0$ \\
Gemma-4-12B  & $20.38\,[19.09,21.46]$ & $0.032\,[0.022,0.041]$ & --- & $66.7\%\,[65.2,68.3]$ & $0.98$ & $0$ \\
\bottomrule
\end{tabular*}
\end{table*}

Two regimes appear, and the contract absorbs both. The Qwen models and
Gemma-4-E2B are feasible but costly under direct execution; the contract
sharpens them, cutting regret $3.8$--$10.9\times$ at $62$--$70\%$ call
reduction. Gemma-4-E4B and Gemma-4-12B violate constraints in nearly every
direct rollout; the contract rejects those transitions and defers to zero
violations at low regret. One contract thus delivers the same guarantee across
a $1.7$B and a $12$B model alike: the proposal source determines which
prefixes are accepted --- certified throughput tracks proposal quality,
not parameter count --- never whether the system is safe. A controlled quality
sweep makes this explicit: as a controlled knob moves the source from random
toward oracle, EMS call reduction rises monotonically ($24\%\!\to\!90\%$,
Spearman $1.0$) at zero violations throughout, and the same quality--throughput
ordering holds on UC at matched tolerance (appendix). The same contract that absorbs all six
frozen LLMs at zero violations is next evaluated on an oracle expensive enough
that each skipped call yields measurable wall-clock savings.

\subsection{Deployment-Scale Wall-Clock Amortization}
\label{sec:deploy}

\begin{figure}[t]
\centering
\includegraphics[width=0.80\columnwidth]{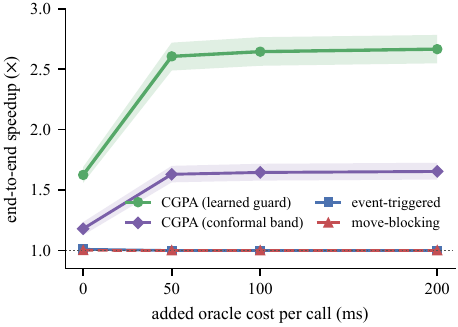}
\caption{Amortization is structural: counterfactually scaling the per-call
oracle cost on EMS lifts CGPA's end-to-end speedup from $1.63\times$ to
$2.67\times$ (conformal band gate: $1.18\times$ to $1.65\times$),
while event-triggered and move-blocking controllers are
pinned near $1.0\times$ once the oracle carries any real cost; zero
violations throughout. Shaded bands: $95\%$ bootstrap CIs ($n{=}100$).}
\label{fig:amortization}
\end{figure}

Wall-clock savings scale with skipped oracle solves: scaling per-call oracle
cost on EMS lifts CGPA's speedup from $1.63\times$ to $2.67\times$, while
event-triggered and move-blocking controllers stay near $1.0\times$
(Figure~\ref{fig:amortization}). On a deployment-scale $48$-step UC MILP,
Table~\ref{tab:uc} reports a hold-commitment drafter and a frozen Qwen3-8B
proposing commitment changes against an event-triggered receding-horizon
controller that re-solves on deviation ($n{=}50$), both CGPA rows using the
\emph{same} learned boundary of Section~\ref{sec:learned}.

\begin{table}[t]
\centering
\caption{Deployment-scale unit commitment ($48$-step horizon, $n{=}50$). MPC
rows report each window's strongest trigger configuration; call reduction
applies to accept/defer controllers; speedup CIs are episode-level bootstrap.
The hold drafter uses $\tau{=}0.04$, the Qwen drafter $\tau{=}0.15$.}
\label{tab:uc}
\scriptsize
\setlength{\tabcolsep}{3.5pt}
\begin{tabular*}{\columnwidth}{@{\extracolsep{\fill}}lrrrr@{}}
\toprule
Controller & Speedup [CI] & Call red. & Reg. & Viol. \\
\midrule
CGPA --- Qwen3-8B & $2.96\times\,[2.01,4.31]$ & $64.3\%$ & $0.021$ & $0$ \\
CGPA --- hold & $1.79\times\,[1.77,1.82]$ & $63.6\%$ & $0.015$ & $0$ \\
Event-trig.\ MPC ($24$) & $1.07\times\,[0.997,1.15]$ & --- & $0.012$ & $0$ \\
Event-trig.\ MPC ($48$) & $0.84\times\,[0.79,0.88]$ & --- & $0.012$ & $0$ \\
\bottomrule
\end{tabular*}
\end{table}

The frozen LLM, with no UC-specific training, achieves the highest speedup: its
commitment changes admit prefixes the deterministic heuristic does not propose,
and the contract converts them into a $2.96\times$ per-episode mean speedup
($2.02\times$ in aggregate wall-clock, every one of the $50$ episodes faster
than fallback) at $2.1\%$ regret --- aggregate amortization the hold drafter
($1.79\times$) does not reach --- while the contract, not the proposal source,
owns the unchanged zero-violation guarantee. Trigger-scheduled replanning achieves at most $1.07\times$; warm-starting
reaches $0.91\times$. The MILP fallback dominates online time
($97.96$--$99.98\%$); a matched-$\tau$ rerun and a second instance
($2.09\times$) confirm the ordering (appendix).

\section{Related Work}

\paragraph{AI control and scalable oversight.} A parallel safety agenda wraps a
capable but \emph{untrusted} policy in a trusted protocol and measures safety
under adversarial control evaluation \cite{greenblatt2024aicontrol}, as do
weak-to-strong generalization \cite{burns2023weaktostrong} and debate
\cite{irving2018debate}, which separate a strong untrusted generator from a
weaker trusted overseer under the asymmetry that verification is cheaper than
generation. CGPA is the certified, sequential-control instance of this
separation --- intelligence and guarantees come from different owners --- but
its trusted core gives a \emph{deterministic} feasibility guarantee and an
auditable per-segment regret price rather than a monitored probability of
subversion, accepting the source by exact verification and a calibrated value
certificate rather than a learned monitor.

\paragraph{Speculative decoding.} CGPA shares the draft--verify--accept
skeleton of speculative decoding
\cite{leviathan2023speculative,chen2023speculative,cai2024medusa,li2024eagle}
but changes the verified object from a token distribution to hard transition
safety plus a certified fallback-relative cost, and the proposal source from
a draft model to an arbitrary AI policy or LLM. Recent production variants add
a calibrated confidence head and a load-adaptive verification schedule
\cite{deepseekai2026dspark}, yet acceptance stays \emph{lossless} ---
calibrated to preserve the target token distribution --- whereas CGPA
calibrates an analogous value boundary to certify decision cost under a
hard-feasibility verifier. Speculative execution has
also cut agentic latency by predicting actions and committing on match
\cite{ye2025speculativeactions}; CGPA instead accepts \emph{deviations} from
the oracle within a certified regret budget, under hard feasibility rather
than equivalence checking.

\paragraph{Selective prediction and learning to defer.} The accept/defer
boundary is a learning-to-defer mechanism for the control loop --- acceptance
is coverage, deferral is abstention to a trusted expert, and the certificate
bounds the cost of coverage
\cite{geifman2017selective,mozannar2020consistent,madras2018predict}. Unlike
sequence-level deferral for token and span abstention
\cite{rayan2025partialdefer}, CGPA defers \emph{executed control prefixes}
under a hard-feasibility verifier and a fallback-relative regret certificate.

\paragraph{Conformal risk control.} The calibrated band places CGPA within
conformal risk control
\cite{vovk2005algorithmic,angelopoulos2023gentle,angelopoulos2024conformalrisk,prinster2026conformalpolicy}:
the controlled risk is the per-segment excess cost the certificate charges,
making $\epsilon$ measured rather than assumed.

\paragraph{Safe control and safe AI execution.} Shielding and predictive safety
filters
\cite{alshiekh2018shielding,wabersich2021predictive,leeman2023predictive,ames2017control}
correct a single action per online solve; intervention controllers pair a
learned proposer with a trusted backup \cite{wagener2021sailr,fulton2018safe};
safety-augmented neural MPC falls back on infeasibility
\cite{hose2023approxnmpc}; other lines accelerate or repair the trusted
controller rather than skip it
\cite{yu2025neural,tabuada2007event,gondhalekar2010moveblocking,diehl2002realtime,amos2018differentiable,bertsekas1997rollout}.
CGPA admits an arbitrary source, separates exact feasibility from a learned
value boundary, and attaches a calibrated telescoping regret certificate.

\section{Conclusion}

CGPA is a certified speculative-execution contract: a thin trusted core
owns every guarantee while an arbitrarily capable untrusted source is
admitted only through it. The same contract drives adversarial drafters and
six frozen LLMs to zero applied violations, holds regret within sampling
noise of the stepwise oracle, and on deployment-scale unit commitment turns
a frozen LLM into a $2.96\times$ wall-clock speedup --- turning
heterogeneous untrusted AI agents into a safe, low-regret,
oracle-amortizing execution layer whose throughput grows with the capability
of the untrusted source, at guarantees that never change.

\bibliography{references}

\clearpage
\appendix
\section{Notation}
Table~\ref{tab:notation} collects the symbols used throughout the main paper
and this appendix.

\begin{table*}[t]
\centering
\caption{Notation.}
\label{tab:notation}
\small
\begin{tabular}{@{}l L{0.82\textwidth}@{}}
\toprule
Symbol & Meaning \\
\midrule
$s_t,\,a_t$ & state and applied action at step $t$ \\
$f,\,c$ & transition map $s_{t+1}{=}f(s_t,a_t)$ and stage cost \\
$\pi_F$ & trusted fallback (oracle) policy \\
$V{=}V^{\pi_F}$ & finite-horizon fallback cost-to-go (terminal value $0$) \\
$D(s,K)$ & untrusted source proposing a length-$K$ action prefix \\
$\hat V$ & cheap value boundary (precomputed or learned), queried in the value guard \\
$\tau,\,\tau_i$ & value tolerance (global; on accepted segment $i$) \\
$\epsilon$ & boundary-error bound $|\hat V(s)-V(s)|\le\epsilon$ at accepted endpoints \\
$\bar\epsilon_i$ & per-segment boundary error: $\max$ of $\epsilon$ over the segment endpoints \\
$g_i$ & realized cost of accepted segment $i$ \\
$f_j,\,\rho_j$ & cost and realized slack of fallback repair segment $j$ \\
$\mathcal A,\,\mathcal R$ & sets of accepted and fallback-repair segments \\
$J_F,\,J_{\mathrm{CGPA}}$ & stepwise-fallback and CGPA episode cost \\
$R_{\mathrm{rel}}$ & fallback-relative normalized regret \\
\bottomrule
\end{tabular}
\end{table*}

\section{Certificate-Aware Loss Sweep}
This section reports the full $\lambda$ sweep referenced in Section 2.5. The spread penalty is essential: at matched call reduction, $\lambda=0.05$ reduces EMS mean regret from $0.0531$ to $0.0364$.

\begin{table}[t]
\centering
\caption{$\lambda$ sweep for the certificate-aware boundary (EMS $K{=}4$, $\tau{=}0.12$; CityLearn $K{=}6$, $\tau{=}0.04$). Regret and call reduction are reported as mean $\pm$ standard deviation over ten seeds.}
\label{tab:supp_lambda}
\scriptsize
\setlength{\tabcolsep}{4pt}
\begin{tabular*}{\columnwidth}{@{\extracolsep{\fill}}llrrrr@{}}
\toprule
Domain & $\lambda$ & Seeds & Regret & Call red. & Viol. \\
\midrule
EMS & $0$ & $10$ & $0.0531\pm0.0246$ & $69.3\pm1.3\%$ & $0$ \\
CityLearn & $0$ & $10$ & $0.0250\pm0.0024$ & $88.0\pm4.0\%$ & $0$ \\
EMS & $0.05$ & $10$ & $0.0364\pm0.0116$ & $68.9\pm0.8\%$ & $0$ \\
CityLearn & $0.05$ & $10$ & $0.0246\pm0.0030$ & $85.0\pm3.3\%$ & $0$ \\
EMS & $0.30$ & $10$ & $-0.0023\pm0.0033$ & $40.1\pm5.2\%$ & $0$ \\
CityLearn & $0.30$ & $10$ & $0.0161\pm0.0030$ & $13.0\pm5.0\%$ & $0$ \\
\bottomrule
\end{tabular*}
\end{table}

\section{Conformal Calibration}
This section reports the calibration quantities supporting the coverage statement in Section 2.5. Calibration and test states are seed-disjoint fallback-reachable rollout-state visits ($n_{\mathrm{cal}}{=}n_{\mathrm{test}}{=}408$ per domain). The nonconformity score is
\[
r_i=\frac{|y_i-\hat V_{0.5}(s_i)|}{\hat V_{0.9}(s_i)-\hat V_{0.1}(s_i)},
\]
and the deployed band is $[\hat V_{0.5}(s)\pm\alpha(\hat V_{0.9}(s)-\hat V_{0.1}(s))]$. At this calibration size both domains meet their $80\%$ and $90\%$ targets.

\begin{table*}[t]
\centering
\caption{Split-conformal calibration of the value-boundary band on seed-disjoint rollout states ($n_{\mathrm{cal}}{=}n_{\mathrm{test}}{=}408$). The two target miscoverage levels instantiate the 80\% and 90\% bands.}
\label{tab:supp_conformal}
\scriptsize
\setlength{\tabcolsep}{3pt}
\begin{tabular*}{\textwidth}{@{\extracolsep{\fill}}lrrrrrrrr@{}}
\toprule
Domain & Calib. & Test & $\delta_{80}$ & $\alpha_{80}$ & Cov.80 & $\delta_{90}$ & $\alpha_{90}$ & Cov.90 \\
\midrule
EMS & $408$ & $408$ & $0.20$ & $4.104$ & $0.816$ & $0.10$ & $4.318$ & $0.900$ \\
CityLearn & $408$ & $408$ & $0.20$ & $1.444$ & $0.853$ & $0.10$ & $1.777$ & $0.917$ \\
\bottomrule
\end{tabular*}
\end{table*}

\section{Tolerance Frontier}
This section reports the $\tau$ frontier referenced in Section 3.1. Increasing $\tau$ selects longer accepted prefixes and higher call reduction with zero applied violations across the reported sweep.

\begin{table}[t]
\centering
\caption{$\tau$ frontier for EMS and CityLearn. Call reduction and regret are medians over ten seeds; mean prefix length reports the accepted-prefix operating point.}
\label{tab:supp_tau}
\scriptsize
\setlength{\tabcolsep}{4pt}
\begin{tabular*}{\columnwidth}{@{\extracolsep{\fill}}lrrrrrr@{}}
\toprule
Domain & $\tau$ & Seeds & Call red. & Regret & Prefix & Viol. \\
\midrule
EMS & $0.01$ & $10$ & $46.1\%$ & $-0.0044$ & $0.83$ & $0$ \\
EMS & $0.02$ & $10$ & $52.6\%$ & $-0.0027$ & $1.00$ & $0$ \\
EMS & $0.04$ & $10$ & $60.0\%$ & $0.0060$ & $1.20$ & $0$ \\
EMS & $0.08$ & $10$ & $66.1\%$ & $0.0129$ & $1.38$ & $0$ \\
EMS & $0.12$ & $10$ & $68.8\%$ & $0.0315$ & $1.47$ & $0$ \\
EMS & $0.20$ & $10$ & $83.0\%$ & $0.2065$ & $2.33$ & $0$ \\
CityLearn & $0.01$ & $10$ & $81.4\%$ & $0.0239$ & $3.29$ & $0$ \\
CityLearn & $0.02$ & $10$ & $83.5\%$ & $0.0240$ & $3.47$ & $0$ \\
CityLearn & $0.04$ & $10$ & $85.6\%$ & $0.0242$ & $3.72$ & $0$ \\
CityLearn & $0.08$ & $10$ & $88.0\%$ & $0.0246$ & $3.99$ & $0$ \\
CityLearn & $0.12$ & $10$ & $89.6\%$ & $0.0252$ & $4.16$ & $0$ \\
CityLearn & $0.20$ & $10$ & $90.9\%$ & $0.0265$ & $4.33$ & $0$ \\
\bottomrule
\end{tabular*}
\end{table}

\section{Certified Execution over Untrusted Proposal Sources}
This section gives the per-source safety/regret table referenced in Section 3.2. \emph{Direct} executes the proposed actions raw; \emph{CGPA} (Certificate-Gated Prefix Acceptance) wraps the same source in the accept/defer contract (the RL \emph{Direct} column is end-to-end oracle replacement). Adversarial drafters and the frozen Gemma-4-12B and Gemma-4-E4B models violate constraints in nearly every episode under direct execution; the contract drives all of them to zero applied violations and collapses regret by three to four orders of magnitude, the worst adversarial source landing at the oracle's own cost. Violations are the per-episode applied-breach rate; adversarial/RL rows use $30$ EMS episodes, LLM rows $50$ same-seed episodes ($\tau{=}0.08$); the full six-LLM comparison appears in the main paper.

\begin{table}[t]
\centering
\caption{Certified execution over untrusted proposal sources on battery EMS. \emph{Direct} executes the proposed actions raw; \emph{CGPA} wraps the same source in the accept/defer contract.}
\label{tab:supp_safety}
\scriptsize
\setlength{\tabcolsep}{4.5pt}
\begin{tabular*}{\columnwidth}{@{\extracolsep{\fill}}lrrrr@{}}
\toprule
 & \multicolumn{2}{c}{Direct} & \multicolumn{2}{c}{CGPA} \\
\cmidrule(lr){2-3}\cmidrule(lr){4-5}
Proposal source & Viol. & Regret & Viol. & Regret \\
\midrule
Adversarial (always-infeasible)        & $1.00$ & $246.9$ & $0$ & $-0.0001$ \\
Adversarial (always-charge)& $1.00$ & $247.9$ & $0$ & $0.119$ \\
Adversarial (anti-forecast)   & $1.00$ & $247.1$ & $0$ & $0.042$ \\
Adversarial (random)     & $1.00$ & $189.7$ & $0$ & $0.114$ \\
End-to-end RL                     & $0$    & $0.503$ & $0$ & $0.074$ \\
Frozen LLM Gemma-4-12B           & $0.98$ & $20.38$ & $0$ & $0.032$ \\
Frozen LLM Gemma-4-E4B           & $1.00$ & $200.0$ & $0$ & $0.020$ \\
\bottomrule
\end{tabular*}
\end{table}

\section{EMS Value-Boundary Frontier}
This section gives the EMS risk--cost frontier referenced in Section 3.3, under a fixed forecast proposal source ($n{=}100$) with only the value boundary changed ($\tau{=}0.04$). The certificate-aware learned guard cuts mean regret from $0.4450$ (unguarded prefix acceptance) to $0.0063$ at $60.4\%$ call reduction, and its conformally calibrated band gate reaches $0.0002$ ($95\%$ CI $[-0.005,0.006]$, spanning zero) --- three orders of magnitude below unguarded acceptance, within sampling noise of the stepwise oracle --- at the best regret tail, all at zero applied violations.

\begin{table}[t]
\centering
\caption{EMS selective accept/defer risk--cost frontier under a fixed forecast proposal source ($n{=}100$); only the value boundary changes. Call reduction is the relative drop in trusted-oracle invocations versus stepwise fallback; regret is $R_{\mathrm{rel}}$; $\tau{=}0.04$.}
\label{tab:supp_mechanism}
\scriptsize
\setlength{\tabcolsep}{5pt}
\begin{tabular*}{\columnwidth}{@{\extracolsep{\fill}}lrrrr@{}}
\toprule
Value boundary & Call red. & Reg. & p95 reg. & Viol. \\
\midrule
Unguarded prefix              & $100.0\%$ & $0.4450$ & $0.5817$ & $0$ \\
Learned guard (cert-aware)    & $60.4\%$  & $0.0063$ & $0.0953$ & $0$ \\
\;+ conformal band gate       & $37.2\%$  & $\mathbf{0.0002}$ & $0.0465$ & $0$ \\
\bottomrule
\end{tabular*}
\end{table}

\section{Full-Plan Neural MPC Comparison}
This section reports the certified-acceptance-versus-full-plan comparison referenced in Section 3.3. The same imitation network (a safety-augmented neural MPC input-sequence approximator) is routed through full-plan acceptance --- which commits the whole predicted sequence as one accept/reject decision and repairs to a safe candidate on infeasibility --- and through CGPA certified prefix acceptance, with only the acceptance mechanism changed (EMS, $n{=}100$, zero violations throughout). CGPA prefix acceptance cuts mean regret from $0.0227$ to $0.0081$ ($2.8\times$) at $86.6\%$ call reduction, and trades regret for amortization at higher $\tau$.

\begin{table}[t]
\centering
\caption{Same neural proposal source (an imitation network), EMS $n{=}100$: certified prefix acceptance versus full-plan approximation, with only the acceptance mechanism changed. Call reduction is relative to stepwise fallback; regret is $R_{\mathrm{rel}}$; zero violations throughout.}
\label{tab:supp_neuralmpc}
\scriptsize
\setlength{\tabcolsep}{5pt}
\begin{tabular*}{\columnwidth}{@{\extracolsep{\fill}}lrrr@{}}
\toprule
Acceptance mechanism & Call red. & Regret & Verifier calls \\
\midrule
Full-plan accept (neural MPC) & $100.0\%$ & $0.0227$ & $48.0$ \\
CGPA prefix ($\tau{=}0.04$)   & $86.6\%$  & $\mathbf{0.0081}$ & $11.4$ \\
CGPA prefix ($\tau{=}0.40$)   & $98.0\%$  & $0.0304$ & $6.9$ \\
\bottomrule
\end{tabular*}
\end{table}

\section{CityLearn Real-Calendar Shift}
This section gives the real-calendar CityLearn result referenced in Sections 3.1 and 3.3. The same accept/defer contract is evaluated under the calendar shift with $\tau=0.04$.

\begin{table*}[t]
\centering
\caption{CityLearn real-calendar shift. The learned conformal band delivers the lowest mean regret at zero violations under the fixed contract.}
\label{tab:supp_citylearn}
\scriptsize
\setlength{\tabcolsep}{4pt}
\begin{tabular*}{\textwidth}{@{\extracolsep{\fill}}lrrrrrrr@{}}
\toprule
Boundary & $\tau$ & $n$ & Prefix & Call red. & Mean reg. & p95 reg. & Viol. \\
\midrule
Learned boundary & $0.04$ & $30$ & $3.39$ & $81.0\%$ & $0.0156$ & $0.0848$ & $0$ \\
Learned conformal band & $0.04$ & $30$ & $1.23$ & $57.5\%$ & $0.0104$ & $0.0848$ & $0$ \\
\bottomrule
\end{tabular*}
\end{table*}

\section{Oracle-Cost Amortization Sweep}
This section reports the full oracle-cost sweep supporting the deployment-amortization statement in Section 3.5. Increasing the per-call oracle cost demonstrates the structural advantage of certified multi-step acceptance over controllers that still call the oracle every step. Timing follows the per-component online accounting (draft, verification, boundary inference, trusted-oracle calls). The event-triggered controller replans when its queued plan is exhausted or when the next queued action fails one-step verification; because that plan is itself built from per-step trusted-oracle calls, its oracle-call count is unreduced under the target-oracle accounting used throughout.

\begin{table*}[t]
\centering
\caption{Oracle-cost amortization sweep on EMS ($n{=}100$). CGPA reduces oracle calls and its speedup rises with oracle cost; event-triggered and move-blocking controllers are pinned at $1.00\times$ once oracle cost dominates.}
\label{tab:supp_oracle_cost}
\scriptsize
\setlength{\tabcolsep}{4pt}
\begin{tabular*}{\textwidth}{@{\extracolsep{\fill}}lrrrrrrr@{}}
\toprule
Controller & Delay ms & $n$ & Speedup & CI & Oracle call red. & Mean reg. & Viol. \\
\midrule
CGPA learned guard & $0$ & $100$ & $1.63\times$ & $[1.56,1.69]\times$ & $60.4\%$ & $0.0063$ & $0$ \\
CGPA conformal band & $0$ & $100$ & $1.18\times$ & $[1.13,1.24]\times$ & $37.2\%$ & $0.0002$ & $0$ \\
Event-triggered & $0$ & $100$ & $1.01\times$ & $[0.99,1.03]\times$ & $0.0\%$ & $0.0000$ & $0$ \\
Move-blocking & $0$ & $100$ & $1.00\times$ & $[0.99,1.02]\times$ & $0.0\%$ & $0.0000$ & $0$ \\
CGPA learned guard & $50$ & $100$ & $2.61\times$ & $[2.49,2.72]\times$ & $60.4\%$ & $0.0063$ & $0$ \\
CGPA conformal band & $50$ & $100$ & $1.63\times$ & $[1.56,1.70]\times$ & $37.2\%$ & $0.0002$ & $0$ \\
Event-triggered & $50$ & $100$ & $1.00\times$ & $[1.00,1.00]\times$ & $0.0\%$ & $0.0000$ & $0$ \\
Move-blocking & $50$ & $100$ & $1.00\times$ & $[1.00,1.00]\times$ & $0.0\%$ & $0.0000$ & $0$ \\
CGPA learned guard & $100$ & $100$ & $2.65\times$ & $[2.53,2.77]\times$ & $60.4\%$ & $0.0063$ & $0$ \\
CGPA conformal band & $100$ & $100$ & $1.65\times$ & $[1.58,1.72]\times$ & $37.2\%$ & $0.0002$ & $0$ \\
Event-triggered & $100$ & $100$ & $1.00\times$ & $[1.00,1.00]\times$ & $0.0\%$ & $0.0000$ & $0$ \\
Move-blocking & $100$ & $100$ & $1.00\times$ & $[1.00,1.00]\times$ & $0.0\%$ & $0.0000$ & $0$ \\
CGPA learned guard & $200$ & $100$ & $2.67\times$ & $[2.55,2.78]\times$ & $60.4\%$ & $0.0063$ & $0$ \\
CGPA conformal band & $200$ & $100$ & $1.65\times$ & $[1.59,1.73]\times$ & $37.2\%$ & $0.0002$ & $0$ \\
Event-triggered & $200$ & $100$ & $1.00\times$ & $[1.00,1.00]\times$ & $0.0\%$ & $0.0000$ & $0$ \\
Move-blocking & $200$ & $100$ & $1.00\times$ & $[1.00,1.00]\times$ & $0.0\%$ & $0.0000$ & $0$ \\
\bottomrule
\end{tabular*}
\end{table*}

\section{UC Per-Episode Speedup}
All $50$ Qwen3-8B CGPA episodes behind the UC speedup statement in
Section 3.5 is faster than stepwise fallback, with a minimum per-episode speedup
of $1.51\times$, a per-episode mean of $2.96\times$, and a total-wall ratio of
$2.02\times$, all at zero applied violations.

\section{Proposal Quality and Matched-$\tau$ Throughput}
This section supports the claim that certified throughput grows with proposal \emph{quality}, not raw model size. On the six-LLM EMS suite (same-seed $n{=}50$), the verifier-accepted prefix length and the CGPA call reduction are perfectly rank-correlated across models (Spearman $\rho{=}1.000$), whereas parameter count and call reduction are not ($\rho{=}-0.058$): the capability axis that predicts throughput is the quality of the proposed prefixes, not the number of parameters.

The same ordering holds at deployment scale once the value tolerance is controlled. Table~\ref{tab:supp_matched_tau} re-runs the UC end-to-end comparison with the hold and Qwen3-8B drafters at \emph{identical} $\tau$ (the main-paper rows use $\tau{=}0.04$ for hold and $\tau{=}0.15$ for Qwen). At every matched $\tau$ the frozen LLM exceeds the deterministic heuristic on mean and total-wall speedup at zero violations: the capability--throughput ordering holds at matched tolerance.

\begin{table}[t]
\centering
\caption{UC end-to-end speedup with the hold and Qwen3-8B drafters at matched $\tau$ ($n{=}50$ per row, zero violations throughout). The Hold and Qwen3-8B columns are total-wall (aggregate) speedups; the last column is the per-episode mean difference (episode-bootstrap $95\%$ CI). Both favor Qwen3-8B at every matched tolerance.}
\label{tab:supp_matched_tau}
\scriptsize
\setlength{\tabcolsep}{5pt}
\begin{tabular*}{\columnwidth}{@{\extracolsep{\fill}}lrrr@{}}
\toprule
$\tau$ & Hold & Qwen3-8B & Qwen$-$hold mean $\Delta$ [95\% CI] \\
\midrule
$0.04$ & $1.79\times$ & $1.85\times$ & $+0.59\,[0.002,1.63]$ \\
$0.08$ & $1.78\times$ & $2.03\times$ & $+1.60\,[0.45,3.12]$ \\
$0.15$ & $1.79\times$ & $2.02\times$ & $+1.16\,[0.21,2.53]$ \\
\bottomrule
\end{tabular*}
\end{table}

A controlled quality knob isolates the same axis without changing the model: at
each step a synthetic EMS drafter takes the oracle/fallback action with
probability $q$ and a random feasible action otherwise, sweeping $q$ from random
($0$) to near-oracle ($1$) with the trusted core (boundary, verifier, $\tau$,
$K$, fallback) frozen at its deployment setting; $q{=}1$ reads a precomputed
oracle trajectory and serves as the controlled upper-bound endpoint. As
proposal quality rises, call reduction increases monotonically
($24\%\!\to\!90\%$, Spearman $\rho{=}1.0$) and mean regret decreases
monotonically ($0.031\!\to\!0.002$), at zero applied violations throughout
(Table~\ref{tab:supp_quality_sweep}): certified throughput tracks proposal
quality directly, consistent with parameter count being non-predictive across
the six-LLM suite ($\rho{=}{-}0.058$).

\begin{table}[t]
\centering
\caption{Controlled EMS proposal-quality sweep ($q$ from random to near-oracle;
$q{=}1$ is a controlled oracle upper bound; $n{=}100$, zero applied violations
throughout). Call reduction rises and mean regret falls monotonically with
proposal quality.}
\label{tab:supp_quality_sweep}
\scriptsize
\setlength{\tabcolsep}{6pt}
\begin{tabular*}{\columnwidth}{@{\extracolsep{\fill}}lrr@{}}
\toprule
$q$ & Call\,red. & Regret \\
\midrule
$0.0$ & $24\%$ & $0.031$ \\
$0.2$ & $38\%$ & $0.025$ \\
$0.4$ & $50\%$ & $0.024$ \\
$0.6$ & $63\%$ & $0.016$ \\
$0.8$ & $75\%$ & $0.007$ \\
$1.0$ & $90\%$ & $0.002$ \\
\bottomrule
\end{tabular*}
\end{table}

\section{UC Second-Instance Replication}
This section reports the second unit-commitment instance referenced in Section 3.5: $100$ generators, demand-peak fraction $0.66$, $48$-step horizon, hold-commitment drafter, and a certificate-aware value boundary trained on this instance ($40$ training episodes; $50$ evaluation episodes on held-out seeds, $\tau{=}0.04$). Every episode is faster than stepwise fallback, and the trusted oracle accounts for $99.99\%$ of online time.

\begin{table}[t]
\centering
\caption{Second UC instance ($100$ generators, $n{=}50$). The certified hold-drafter acceleration replicates at zero violations.}
\label{tab:supp_uc_second}
\scriptsize
\setlength{\tabcolsep}{5pt}
\begin{tabular*}{\columnwidth}{@{\extracolsep{\fill}}lr@{}}
\toprule
Metric & Value \\
\midrule
Wall speedup [CI] & $2.09\times\,[2.03,2.17]$ \\
Call reduction [CI] & $67.8\%\,[67.2,68.3]$ \\
Mean regret [CI] & $0.0136\,[0.0129,0.0143]$ \\
p95 / max regret & $0.0181$ / $0.0206$ \\
Mean accepted prefix & $11.12$ \\
Violations & $0$ \\
\bottomrule
\end{tabular*}
\end{table}

\section{Certificate Audit Components}
This section reports the audit quantities used in Sections 3.3 and 3.5. The ledger counts accepted steps separately from error-charged accepted segments.

\begin{table*}[t]
\centering
\caption{EMS certificate audit components. The band gate reduces realized excess while maintaining complete segment coverage.}
\label{tab:supp_cert_ems}
\scriptsize
\setlength{\tabcolsep}{3pt}
\begin{tabular*}{\textwidth}{@{\extracolsep{\fill}}lrrrrrrrrrr@{}}
\toprule
Row & $n$ & $|\mathcal{A}|$ & $\sum k_i$ & Mean $k$ & $\tau$ budget & Error & Envelope & Realized & Bound/real. & Coverage \\
\midrule
Learned boundary & $30$ & $5.13$ & $14.53$ & $2.83$ & $0.796$ & $0.000$ & $0.796$ & $0.490$ & $1.62\times$ & $93.3\%$ \\
Learned band gate & $30$ & $3.20$ & $9.00$ & $2.81$ & $0.591$ & $0.320$ & $0.910$ & $0.068$ & $13.45\times$ & $100.0\%$ \\
\bottomrule
\end{tabular*}
\end{table*}

\begin{table*}[t]
\centering
\caption{UC n=50 certificate audit envelope. The hold and Qwen rows keep zero violations while the accepted-step ledger substantially exceeds the number of error-charged segments. The $\tau$ coverage column is the fraction of accepted segments within the nominal $\tau$-budget; the conformally calibrated per-segment error charge (p95 error) prices the small residual, so realized regret stays bounded.}
\label{tab:supp_cert_uc}
\scriptsize
\setlength{\tabcolsep}{3pt}
\begin{tabular*}{\textwidth}{@{\extracolsep{\fill}}lrrrrrrrrr@{}}
\toprule
Row & $n$ & $\tau$ & Segments & Steps & $\tau$ coverage & Mean reg. & p95 reg. & Max reg. & p95 error \\
\midrule
Hold-commitment CGPA & $50$ & $0.04$ & $102$ & $1526$ & $100.0\%$ & $0.0145$ & $0.0173$ & $0.0179$ & $0.0066$ \\
Qwen3-8B CGPA & $50$ & $0.15$ & $316$ & $1543$ & $98.0\%$ & $0.0211$ & $0.0712$ & $0.1552$ & $0.0041$ \\
\bottomrule
\end{tabular*}
\end{table*}

\section{LLM Proposal Protocol}
This section records the proposal protocol used by the frozen LLMs. The prompt template, parser, and deterministic decoding settings are shared across models; the verifier and fallback contract remain unchanged across proposal sources.

\paragraph{EMS action proposal.}
The EMS prompt presents the current hour, the battery state of charge, the
running peak import, the allowed discrete action set, the battery and grid
feasibility limits, and the $K$-step load/PV/price forecast, and asks for one
discrete battery action per hour (positive discharges, negative charges),
returned as a JSON list of $K$ actions drawn from the allowed set.

\paragraph{UC commitment-change proposal.}
The UC prompt presents the current commitment vector, the per-generator minimum
and maximum output, the marginal costs, the units currently locked on or off by
minimum up/down times, and the $K$-step demand forecast, and asks --- for each
of the next $K$ steps --- for the set of generators to toggle relative to the
previous step, returned as a JSON list of $K$ index sets. It requests the
smallest commitment change that keeps each step's committed capacity able to
meet demand without toggling locked units.

\begin{table}[t]
\centering
\caption{LLM decoding and parsing protocol. The same settings are used across Qwen3-1.7B/4B/8B and Gemma-4-E2B/E4B/12B.}
\label{tab:supp_llm_protocol}
\scriptsize
\setlength{\tabcolsep}{4pt}
\begin{tabular}{lL{0.58\linewidth}}
\toprule
Item & Protocol \\
\midrule
Decoding & Greedy decoding with temperature $0$, top-$p=1$, and a fixed $1500$-token generation cap. \\
Chat format & The model's standard chat template, with reasoning output disabled where supported. \\
JSON extraction & A deterministic parser reads the first JSON object and extracts its proposed action list (EMS) or commitment-change list (UC). \\
Invalid output & Unparseable responses defer immediately to the trusted oracle. \\
Safety owner & The LLM never applies actions directly; every proposed prefix is transition-checked and value-checked before execution. \\
\bottomrule
\end{tabular}
\end{table}

\section{UC Timing Breakdown}
This section reports the per-decision timing breakdown referenced in Section 3.5. The trusted MILP oracle dominates online time; drafting, verification, and boundary inference together consume at most $2.04\%$ of per-decision time (Table~\ref{tab:supp_timing}). For the warm-start comparison referenced in Section 3.5, warm-starting the MILP with the same proposal (instead of skipping it) reaches only $0.91\times$ end-to-end speedup on the same $n{=}50$ UC run set: it still pays the full per-step solve, so warm-starting cannot substitute for skipping the oracle.

\begin{table*}[t]
\centering
\caption{UC n=50 per-decision timing. Shares are computed from measured component times over the formal n=50 runs.}
\label{tab:supp_timing}
\scriptsize
\setlength{\tabcolsep}{4pt}
\begin{tabular*}{\textwidth}{@{\extracolsep{\fill}}lrrrrrr@{}}
\toprule
Row & Oracle ms & Draft ms & Verify ms & Boundary ms & Oracle share & Other share \\
\midrule
Hold-commitment CGPA & $27838.5$ & $0.027$ & $0.432$ & $3.976$ & $99.98\%$ & $0.02\%$ \\
Qwen3-8B CGPA & $11387.2$ & $234.446$ & $0.269$ & $2.965$ & $97.96\%$ & $2.04\%$ \\
\bottomrule
\end{tabular*}
\end{table*}

\section{Verifier-Error Robustness}
This section reports the verifier-robustness sweep referenced in Section 3.2 (EMS, $n{=}100$, $\tau{=}0.04$). The verifier screens prefixes against a next-state estimate corrupted by bounded uniform noise (a fraction of the state range), while the executed trajectory uses the true dynamics; applied violations are measured on the true trajectory. Across the full noise range, no corrupted prefix is ever admitted (false-accept count zero at every level): violations stay at zero and the verifier instead defers more, so error is absorbed as reduced amortization rather than reduced safety.

\begin{table}[t]
\centering
\caption{Verifier-error robustness on EMS ($n{=}100$, a separate verifier-perturbation harness; the noise${=}0$ row is this harness's own reference). Noise is the bounded uniform error injected into the verifier's next-state estimate, as a fraction of the state range. Violations stay zero throughout; call reduction degrades gracefully.}
\label{tab:supp_verifier}
\scriptsize
\setlength{\tabcolsep}{4pt}
\begin{tabular*}{\columnwidth}{@{\extracolsep{\fill}}lrrr@{}}
\toprule
Verifier noise & Call red. & Mean reg. & Viol. \\
\midrule
$0.0$  & $63.5\%$ & $0.0077$ & $0$ \\
$0.005$ & $63.0\%$ & $0.0081$ & $0$ \\
$0.01$ & $62.8\%$ & $0.0122$ & $0$ \\
$0.02$ & $60.2\%$ & $0.0129$ & $0$ \\
$0.05$ & $55.7\%$ & $0.0159$ & $0$ \\
$0.10$ & $51.5\%$ & $0.0200$ & $0$ \\
$0.20$ & $44.5\%$ & $0.0188$ & $0$ \\
\bottomrule
\end{tabular*}
\end{table}

\section{White-Box Adaptive Adversary}
We stress the proposal slot with white-box adaptive adversaries that re-plan at every decision with full knowledge of the verifier, value boundary, dynamics, stage cost, and fallback oracle; they may craft any proposal but cannot modify the trusted core. One adversary maximizes true constraint-violation severity; a second maximizes the realized cost of accepted prefixes. Both are evaluated on EMS ($n{=}100$) and deployment-scale UC ($n{=}50$) under the same boundary, exact verifier, and fallback as the main runs.

The violation-maximizing adversary drives raw (un-wrapped) execution to a constraint breach in \emph{every} episode on both domains (Direct), yet CGPA admits none (Table~\ref{tab:supp_adversary}): it defers at every step, so amortization collapses to the trusted oracle floor (call reduction $0$) while applied violations stay at zero. A cost-maximizing adversary that keeps proposals feasible likewise causes zero applied violations. An adversary can erase CGPA's amortization but cannot breach its safety, because the feasibility gate never queries the value boundary the adversary attacks.

\begin{table}[t]
\centering
\caption{White-box adaptive violation-maximizing adversary: per-episode applied violation rate (raw execution versus CGPA) and CGPA call reduction. The adversary re-plans each step to breach constraints; CGPA holds applied violations at zero and absorbs the attack as lost amortization (call reduction falls to the oracle floor).}
\label{tab:supp_adversary}
\small
\begin{tabular*}{\columnwidth}{@{\extracolsep{\fill}}lrrr@{}}
\toprule
Domain & Direct viol. & CGPA viol. & CGPA call red. \\
\midrule
EMS ($n{=}100$) & $1.00$ & $0$ & $0\%$ \\
UC ($n{=}50$)   & $1.00$ & $0$ & $0\%$ \\
\bottomrule
\end{tabular*}
\end{table}

\paragraph{Cost degrades with attack strength; safety does not.} The
cost-maximizing adversaries refine this picture. On UC a cost-budgeted
adversary --- bounded to local cost envelopes --- leaves the regret certificate
fully covered (coverage $1.0$) and, within this adversary harness, attains both
lower regret and higher call reduction than the hold heuristic ($0.014$ vs
$0.026$ mean regret, $98\%$ vs $93\%$ call reduction; paired $95\%$ CIs exclude
zero), while an unbounded cost-maximizer
exhausts the nominal envelope (coverage $0$) at still-zero violations. On EMS
the bounded adversary matches the forecast drafter's own envelope
(matching its regret) at zero violations. Across attack strengths, applied
violations never leave zero. An adversary buys reduced
amortization or paid regret, never an unsafe transition.

\paragraph{A learned adversary fares no better.} We also train a \emph{learned}
white-box attacker --- an MLP that optimizes feasible high-cost prefixes directly
against the frozen contract ($n{=}100$) --- in place of the hand-crafted search.
Its feasible (un-wrapped) mean regret reaches $0.570$, the highest of any feasible
proposal source we evaluate and above the forecast drafter ($0.439$). The same
frozen contract still holds it to zero applied violations at certificate coverage
$0.99$ and mean/p95 regret $0.014/0.022$, as the value guard defers its aggressive
prefixes (call reduction $33\%$). A learned attacker can raise feasible cost but
cannot convert proposal control into an unsafe transition.

\section{Out-of-Distribution Day Expansion and Win Rate Against the Stepwise Oracle}
This section reports the day-level statistics behind the distribution-shift result of Section 3.3 ($18$ held-out calendar-shifted days $\times$ $5$ learned-drafter seeds, $90$ paired day--seed runs). A win counts a run whose total cost is below the stepwise oracle reference on the same day. Mean regret is $-0.137$ $[-0.166,-0.110]$ for the learned source and $-0.082$ $[-0.121,-0.044]$ for the forecast source; the in-distribution learned-guard win rate is $0.50$.

\begin{table}[t]
\centering
\caption{CGPA win rates against the stepwise oracle under calendar shift. Wilson $95\%$ intervals; zero violations in every run.}
\label{tab:supp_winrate}
\scriptsize
\setlength{\tabcolsep}{4pt}
\begin{tabular*}{\columnwidth}{@{\extracolsep{\fill}}lrrrr@{}}
\toprule
Scope & Wins & $n$ & Rate & Wilson $95\%$ \\
\midrule
Learned, day level & $15$ & $18$ & $0.83$ & $[0.61,0.94]$ \\
Forecast, day level & $12$ & $18$ & $0.67$ & $[0.44,0.84]$ \\
Learned, day--seed level & $72$ & $90$ & $0.80$ & $[0.71,0.87]$ \\
Forecast, day--seed level & $60$ & $90$ & $0.67$ & $[0.56,0.76]$ \\
\bottomrule
\end{tabular*}
\end{table}

\section{Full Proofs}
This section expands the proof sketches of the main paper. Statement numbering refers to the main paper; $V=V^{\pi_F}$ is the oracle cost-to-go with terminal value $V(s_T)=0$, so $V(s_0)=J_F(s_0)$ by definition.

\paragraph{Lemma 1 (Per-segment value decomposition).}
An accepted segment $i$ is admitted by the exact guard only if $g_i + V(s_{i+1}) \le V(s_i) + \tau_i|V(s_i)|$; rearranging gives $g_i \le V(s_i)-V(s_{i+1})+\tau_i|V(s_i)|$. For a repair segment $j$, define the realized repair slack $\rho_j := \max\{0,\, f_j + V(s_{j+1}) - V(s_j)\} \ge 0$; then $f_j \le V(s_j)-V(s_{j+1})+\rho_j$ holds by construction. If the repair applies $\pi_F$ with the same remaining horizon, the Bellman recursion $V(s_j)=f_j+V(s_{j+1})$ gives $\rho_j=0$. \hfill$\square$

\paragraph{Certificate 1 (Exact-boundary bound).}
The rollout partitions into consecutive segments whose end state is the next start state: $s_0=z_0,z_1,\dots,z_M=s_T$. Summing the Lemma~1 inequalities over all segments,
\begin{align*}
J_{\mathrm{CGPA}}(s_0)&=\sum_{i\in\mathcal A} g_i+\sum_{j\in\mathcal R} f_j\\
&\le \sum_{m=0}^{M-1}\bigl(V(z_m)-V(z_{m+1})\bigr)\\
&\quad +\sum_{i\in\mathcal A}\tau_i|V(s_i)|+\sum_{j\in\mathcal R}\rho_j .
\end{align*}
The first sum telescopes to $V(s_0)-V(s_T)=V(s_0)=J_F(s_0)$. Subtracting $J_F(s_0)$ yields the bound; dividing by $\max(|J_F(s_0)|,10^{-12})$ recovers the normalized regret of Eq.~(2). \hfill$\square$

\paragraph{Certificate 2 (Approximate-boundary bound).}
An accepted segment satisfies the deployed rule $g_i+\hat V(s_{i+1})\le \hat V(s_i)+\tau_i|\hat V(s_i)|$, so $g_i \le \hat V(s_i)-\hat V(s_{i+1})+\tau_i|\hat V(s_i)|$. Assumption~2 at the two endpoints gives $\hat V(s_i)\le V(s_i)+\epsilon$, $-\hat V(s_{i+1})\le -V(s_{i+1})+\epsilon$, and $|\hat V(s_i)|\le |V(s_i)|+\epsilon$, hence
\begin{align*}
g_i &\le V(s_i)-V(s_{i+1})+\tau_i|V(s_i)|+(2+\tau_i)\epsilon\\
&\le V(s_i)-V(s_{i+1})+\tau_i|V(s_i)|+(2+\tau_{\max})\epsilon .
\end{align*}
Repair segments are unchanged from Lemma~1, and the telescoping argument of Certificate~1 applies verbatim; the boundary error enters once per accepted segment, contributing $(2+\tau_{\max})\epsilon|\mathcal A|$ in total. \hfill$\square$

\paragraph{Proposition 4 (Conformal forward regret bound).}
A rollout has at most $T{+}1$ segment-boundary states. Split-conformal validity under the stated exchangeability gives, for each boundary state $s$, $\Pr[\,|\hat V_{0.5}(s)-V(s)|\le\epsilon(s)\,]\ge 1-\delta/(T{+}1)$ when the multiplier $\alpha$ is the conformal quantile of the calibration scores at miscoverage $\delta/(T{+}1)$. A union bound over the boundary states makes the inequalities hold jointly with probability at least $1-\delta$. On this event, the Certificate~2 substitution applies with the endpoint-specific bands:
\begin{align*}
g_i &\le V(s_i)-V(s_{i+1})+\tau_i|V(s_i)|\\
&\quad +(1+\tau_i)\epsilon(s_i)+\epsilon(s_{i+1})\\
&\le V(s_i)-V(s_{i+1})+\tau_i|V(s_i)|+(2+\tau_{\max})\bar\epsilon_i ,
\end{align*}
with $\bar\epsilon_i=\max\{\epsilon(s_i),\epsilon(s_{i+1})\}$. Summing and telescoping as in Certificate~1 completes the proof. \hfill$\square$

\section{Non-Learned Value Boundaries and Unguarded Shields}
This section reports the baseline-boundary comparison referenced in Section 3.3, run on a common $n{=}100$ EMS episode draw across the four boundaries ($\tau{=}0.04$). The unguarded shield executes the verified proposal at every draft length and pays its full regret --- the executed trajectories are identical across $K\in\{1,2,4,8\}$, so the result is insensitive to draft length. The learned boundary answers per-state queries in milliseconds without online construction.

\begin{table}[t]
\centering
\caption{Value-boundary baselines on a common EMS episode set ($n{=}100$, a separate draw from Table~\ref{tab:supp_mechanism}; the four boundaries share this draw to isolate the boundary effect). The learned boundary attains the lowest mean regret and the best tail among deployable boundaries.}
\label{tab:supp_nonlearned}
\scriptsize
\setlength{\tabcolsep}{1pt}
\begin{tabular*}{\columnwidth}{@{\extracolsep{\fill}}lrrrr@{}}
\toprule
Boundary & Mean reg. [CI] & p95 & Call red. & Viol. \\
\midrule
Unguarded shield ($K{=}1$--$8$) & $0.4450\,[0.4319,0.4579]$ & $0.5817$ & $100.0\%$ & $0$ \\
Forecast value table & $0.0153\,[0.0086,0.0223]$ & $0.0804$ & $61.7\%$ & $0$ \\
Oracle-rollout value table & $0.0261\,[0.0174,0.0349]$ & $0.1124$ & $64.3\%$ & $0$ \\
Learned boundary & $0.0048\,[-0.0006,0.0110]$ & $0.0710$ & $63.1\%$ & $0$ \\
\bottomrule
\end{tabular*}
\end{table}

\section{Conceptual Positioning}
Table~\ref{tab:positioning} positions CGPA against the neighbouring lines of
work discussed in the main paper's Related Work. CGPA is closest to
speculative decoding in invocation pattern and to safety filters in the
trusted-runtime contract, but its accepted object is a verified multi-step
control prefix gated by a value boundary, and the proposal source is arbitrary.

\begin{table}[t]
\centering
\caption{Conceptual positioning of CGPA against neighbouring lines of work.}
\label{tab:positioning}
\scriptsize
\setlength{\tabcolsep}{3pt}
\begin{tabular}{L{0.17\linewidth}L{0.19\linewidth}L{0.19\linewidth}L{0.28\linewidth}}
\toprule
Line of work & Verified object & Trusted work timing & Difference from CGPA \\
\midrule
Speculative decoding & token prefix distribution & target model verifies a draft batch & no transition constraints or fallback-relative cost certificate \\
Selective prediction / learning to defer & per-instance abstention & defer to a trusted expert on low confidence & single-step abstention, not a value-gated multi-step executed prefix \\
Shields and safety layers & one proposed action & correction at unsafe or risky actions & safety correction without committing a value-gated future prefix \\
Predictive safety filters / CBFs & candidate control input & online projection or constrained solve & can serve as fallback/verifier; do not amortize accepted prefixes \\
Event-triggered control & controller update condition & recompute when trigger fires & state-error reuse rather than prefix-cost-bounded multi-step commit \\
Move-blocking MPC and rollout & optimized or evaluated lookahead & solver or rollout remains central & reduce variables or improve actions, not trusted invocations \\
CGPA & verified action prefix from an arbitrary source & solver only on rejection or repair & exact safety plus a cheap value gate and an auditable per-segment regret certificate \\
\bottomrule
\end{tabular}
\end{table}

\end{document}